\newcommand{\bea}{\begin{eqnarray}}
\newcommand{\eea}{\end{eqnarray}}
\newcommand{\be}{\begin{equation}}
\newcommand{\ee}{\end{equation}}
\newtheorem{theorem}{Theorem}[section]
\newtheorem{proposition}[theorem]{Proposition}
\theoremstyle{definition}
\renewenvironment{proof}{{\noindent\bf Proof.}}{\hfill $\Box$\par\vskip3mm}
\newcommand{\ttt}{\pmb{t}}
\newcommand{\uuu}{\pmb{u}}
\newcommand{\Nd}{\mathcal{N}}
\newcommand{\Mcal}{\mathcal{M}}
\newcommand{\bb}{\pmb{b}}
\newcommand{\db}{\pmb{d}}
\newcommand{\Zb}{\mathbb Z}
\newcommand{\fb}{\pmb{f}}
\newcommand{\pb}{\pmb{p}}
\newcommand{\nb}{\pmb{n}}
\newcommand{\mb}{\pmb{m}}
\begin{document}

\date{}

\author[J. E. Andersen]{J{\o}rgen Ellegaard Andersen}
\address{QGM\\
Department of Mathematics\\
Aarhus University\\
DK-8000 Aarhus C\\
Denmark}
\email{jea.qgm{\char'100}gmail.com}

\author[H. Fuji]{Hiroyuki Fuji}
\address{Faculty of Education\\ 
Kagawa University\\
Takamatsu 760-8522\\
Japan
QGM\\
Aarhus University\\
DK-8000 Aarhus C\\
Denmark
}
\email{fuji{\char'100}ed.kagawa-u.ac.jp}

\author[R. C. Penner]{Robert C. Penner}
\address{Institut des Hautes {\'E}tudes Scientifiques, 35 route de Chartres, 91440 Burs-sur-Yvette, France;
Division of Physics, Mathematics and Astronomy, California Institute of Technology, Pasadena, CA 91125, USA}
\email{rpenner{\char'100}caltech.edu,\hspace{0.3cm}rpenner@ihes.fr}

\author[C. M. Reidys]{Christian M. Reidys}
\address{Biocomplexity Institute of Virginia Tech Blacksburg, VA 24061, USA}
\email{duckcr{\char'100}vbi.vt.edu}

\title [Partial chord diagrams and boundary length and point spectrum]{The boundary length and point spectrum enumeration of partial chord diagrams\\ using cut and join recursion}

\thanks{Acknowledgments: 
The authors thank Masahide Manabe and Piotr Su{\l}kowski  for useful comments.
JEA and RCP are supported by the Centre 
for Quantum Geometry of Moduli Spaces which is funded by the Danish National Research Foundation.
The research of HF is supported by the
Grant-in-Aid for Research Activity Start-up [\# 15H06453], Grant-in-Aid
for Scientific Research(C)  [\# 26400079], and Grant-in-Aid for Scientific
Research(B)  [\# 16H03927]  from the Japan Ministry of Education, Culture,
Sports, Science and Technology.
}
\begin{abstract}
We introduce the boundary length and point spectrum, as a joint generalization of the boundary length spectrum and boundary point spectrum in \cite{AAPZ}. We establish by cut-and-join methods that the number of partial chord diagrams filtered by the boundary length and point spectrum satisfies a recursion relation, which combined with an initial condition determines these numbers uniquely.
This recursion relation is equivalent to a 
second order, non-linear, algebraic partial differential
equation for  the generating function of the numbers of partial chord diagrams filtered by the boundary length and point spectrum.
\end{abstract}

\maketitle
\tableofcontents

\section{Introduction}\label{sec1}

A {\em partial chord diagram}, is a special kind of graph, which can be specified as follows. The graph consists of a number of line segments (which we will also call backbones) arranged along the real line (hence they come with an ordering) with a number of vertices on each. A number of semi-circles (called chords) arranged in the upper half plan are attached at a subset of the vertices of the line segments, in such a way that no two chords have endpoints on the line segments in common. The vertices which are not attached to chord ends are called the marked points. 
A {\em chord diagram} is by definition a partial chord diagram with no marked points. 
Partial chord diagrams occur in many branches of mathematics,
including topology \cite{Ba,Kontsevich2}, geometry \cite{AMR1,AMR2,ABMP}
and representation theory \cite{C-SM}. Furthermore, they play a very prominent role in macro molecular biology. 
Please see the introduction of \cite{AFMPS2} for a short review of these applications.

As documented in \cite{OZ,VOZ,VO, APRWat,APRWan,AHPR,ACPRS,ACPRS2,AAPZ,Reidys,P5,AFMPS1,AFMPS2}, 
the notion of a {\it fatgraph} \cite{P1,P2,P3,P4} is a useful concept when studying partial chord diagrams\footnote{
In \cite{MY,CY}, the Schwinger-Dyson approach to the enumeration of chord diagrams is also discussed.
}.
A fatgraph is a graph together with a cyclic ordering on each collection of half-edges 
incident on a common vertex.
A partial linear chord diagram $c$ has a natural fatgraph structure induced from its presentation in the plane.
The fatgraph $c$ has canonically a two dimensional surface with boundary $\Sigma_c$ associated to it
(e.g. see Figure \ref{partial_chord}).

\begin{figure}[h]
\begin{center}
   \includegraphics[width=120mm,clip]{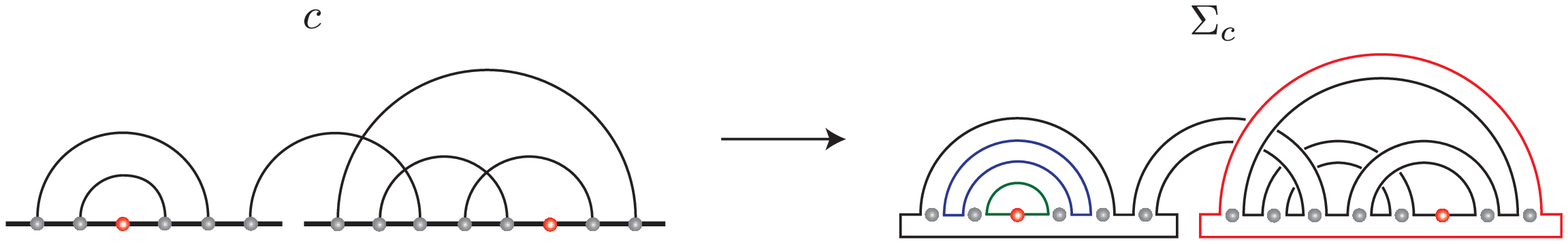}
\end{center}
\caption{\label{partial_chord} The partial chord diagram $c$ and the surface $\Sigma_c$ associated to the fatgraph
 with marked points. 
 This partial chord diagram has the type
 $\{g,k,l;\{b_i\};\{n_i\};\{\ell_i\}\}=\{1,6,2;\{b_6=1,b_8=1\};\{n_0=2,n_1=2\};\{\ell_1=1,\ell_2=2,\ell_9=1\}\}$. The boundary length-point spectra are $\{m_{(1)}=1,m_{(0,0)}=2,m_{(0,0,0,0,0,1,0,0,0)}=1\}$.
 }
\end{figure}

We now recall the basic definitions from \cite{AAPZ} for a partial chord diagram $c$. 
\begin{itemize}
\item The number  of chords, the number of marked points, and the number of backbones of $c$ are denoted $k$,  $l$, and $b$ respectively.

\item 
The Euler characteristic and the genus of $\Sigma_c$, are denoted $\chi$ and $g$ respectively. If $n$ is the number of boundary components of $\Sigma_c$, we have that
\begin{align}
\chi=2-2g,
\end{align}
and $g$ obeys Euler's relation
\begin{align} 
2-2g=b-k+n.
\end{align}

\item The  \textit{backbone  spectrum} $\bb=(b_0,b_1,b_2,\ldots)$ are assigned to $c$, if it has $b_i$ backbones with precisely $i\geq 0$ vertices (of  degree either two or three);\\

\item The \textit{boundary point spectrum} $\nb=(n_0,n_1,\ldots)$ is assigned to $c$, if its boundary contains $n_i$ connected components with $i$ marked points;\\

\item The \textit{boundary length spectrum} ${\boldsymbol\ell}=(\ell_1,\ell_2,\ldots )$ is assigned to $c$,  if the boundary cycles of the diagram
consist of $\ell_K$ edge-paths of length $K\geq 1$, where the {\em length} of a boundary cycle is the number of chords
it traverses counted with multiplicity  (as usual on the graph obtained from the diagram by collapsing each backbone to a distinct point) {\sl plus} the number of backbone undersides it traverses (or in other words, the number of traversed connected components obtained by removing all the chord endpoints from all the backbones).\\

\end{itemize}

We now introduce the combination of the boundary length spectrum and the boundary point spectrum, namely our new boundary length and point spectrum.

\begin{itemize}

\item The \textit{boundary length and point spectrum} $\pmb{m}=(m_{(d_1,\ldots,d_K)})$  is assigned to $c$,  if its boundary contains $m_{(d_1,\ldots,d_K)}$ connected components of length $K$ with marked point spectrum $(d_1,\ldots, d_K)$, meaning that there cyclically around the boundary components are $d_1$ marked points, then a chord or a backbone underside, then $d_2$ marked points, then a chord or a backbone underside, and so on all the way around the boundary component. In fact we will not need to distinguish which way around the boundary we go. Hence it is only the cyclic ordered tuple of the numbers $d_1, \ldots, d_K$, which we need and which we denote as $\db_K = (d_1,\ldots,d_K)$. We remark that some of the $d_I$ ($1\le I\le K$) might be zero.\\

\end{itemize}

We have the following relations
\begin{align}
& b=\sum_{i\geq 0} b_i, \; n=\sum_{i\geq 0} n_i=\sum_{K\geq 1} \ell_K =  \sum_{K\geq 1} \sum_{\db_K}  m_{\db_K},  
\nonumber \\
& 2k+l=\sum_{i>0} ib_i,\; l=\sum_{i>0} in_i =  \sum_{K\geq 1} \sum_{\db_K}  |\db_K|m_{\db_K}
\nonumber \\
& 2k+b=\sum_{K\geq 1}K\ell_K = \sum_{K\geq 1} \sum_{\db_K}  Km_{\db_K},
\nonumber 
\end{align}
 where $|\db_K| = \sum_{I=1}^K d_I$. For all $K$ and $i$, we also have that
 \begin{align}
\ell_K =\sum_{\db_K}  m_{\db_K},\quad n_i = \sum_{K\ge 1} \sum_{i= |\db_K|} m_{\db_K}. 
\nonumber
\end{align}
We define $\Mcal_{g,k,l}({\bb},{\mb})$  to be the number of connected partial chord diagrams of type $\{g,k,l;\bb;\mb\}$ taken to be zero if there is none of the specified type. 
In \cite{AAPZ}, 
$\mathcal{N}_{g,k,l}(\bb,\nb,\pb)$ is defined as the number of distinct connected partial chord diagrams
of type $\{g, k, l; \bb;\nb; \pb\}$.
We find the relation between these numbers by the following formula
\begin{align}
\Nd_{g,k,l}(\bb,\boldsymbol\ell,\nb) = \sum_{\mb \in M(\boldsymbol\ell,\nb)}  \Mcal_{g,k,l}(\bb,\mb),
\nonumber
\end{align}
where
\begin{align}
M(\boldsymbol\ell,\nb) = \bigl\{ \mb \,\big|\, \ell_K =\sum_{\db_K}  m_{\db_K}\mbox{, } n_i = \sum_{K\ge 1} \sum_{i = |\db_K|} m_{\db_K} \bigr\}.
\nonumber
\end{align}
In particular, the numbers $\Nd_{g,k,l}(\bb,\nb)$ and $\Nd_{g,k,b}(\boldsymbol\ell)$ are given by
\begin{align}
\Nd_{g,k,l}(\bb,\nb)=\sum_{\boldsymbol\ell} ~\Nd_{g,k,l}(\bb,\boldsymbol\ell,\nb),
\quad
\Nd_{g,k,b}(\boldsymbol\ell)=\sum_{\nb} \sum_{\sum b_i=b} ~\Nd_{g,k,l=0}(\bb,\boldsymbol\ell,\nb),
\nonumber
\end{align}

For the index $\bb=(b_{i})$, we consider the variable $\pmb{t}=(t_i)$ and denote
\begin{align}
\ttt^{\pmb{b}}=\prod_{i\ge 0}t_i^{b_i}.
\nonumber
\end{align}
And for the index $\db= (\db_{K})$,
we consider the variable $\uuu=(u_{\db_K})$
and denote
\begin{align}
\uuu^{\pmb{m}} = \prod_{K\ge 1} \prod_{\db_K} u_{\db_K}^{m_{\db_K}}
\nonumber 
\end{align}
for any $\mb = (\mb_{\db_K})$.
We define the orientable, multi-backbone, boundary length and point spectrum generating function $H(x,y;\pmb{t};\pmb{u})=\sum_{b\geq 0} \mathcal{F}_b(x,y;\pmb{t};\pmb{u})$, where
\begin{align}\label{gf}
H_b(x,y;\pmb{t};\pmb{u})=
{1\over{b!}} \sum_{g=0}^\infty \; \sum_{k=2g+b-1}^\infty \: \sum_{\substack{\sum_K\sum_{\db_K} m_{\db_K}\\=k-2g-b+2}}\;\sum_{\sum b_i=b}\Mcal_{g,k,l}(\bb,\mb) x^{2g} y^k{\pmb t}^{\bb} {\pmb{ u}}^{\mb} ,
\end{align}

For an element $\pb = (p_{({d_1},\ldots {d_K})})$, where each $p_{({d_1},\ldots {d_K})}\in \Zb$, we write
\begin{align}
\pb = \pb^+ -\pb^-,
\nonumber
\end{align}
where $\pb^+$ contains all the positive entries and $\pb^-$ the absolute value of all the negative ones, which we assume to both be finite. We define the differential operator 
\begin{align}
D_{\pb} = \prod_{\db} u_{\db}^{\pb^-_{\db}} \prod_{\db} \left( \frac{\partial}{\partial u_{\db}}\right)^{\pb^+_{\db}}.
\nonumber
\end{align}

We now define $s_{I,J,\ell,m}(\db_K)$, $s_{I,\ell,m}(\db_K)$ and $q_{I,J,\ell,m}(\db_K,\fb_L)$ to be strings like $\pb$ given by the following formulae
\begin{align}
&s_{I,J,\ell,m}(\db_K) = \pmb{e}_{\db_K}- \pmb{e}_{(d_1,\ldots,d_{I-1},d_{I}-\ell-1,m,d_{J+1},\ldots,d_{K})} - \pmb{e}_{(\ell,d_{I+1},\ldots,d_{J-1},d_{J}-m-1)},\nonumber\\
&s_{I,\ell,m}(\db_K) = \pmb{e}_{\db_K} - \pmb{e}_{(d_1,\ldots, d_{I-1},\ell,m,d_{I+1}, \ldots, d_K)} - \pmb{e}_{(d_I -\ell-m-2)} ,
\nonumber
\\
&q_{I,J,\ell,m}(\db_K,\fb_L) 
\nonumber \\
&= \pmb{e}_{\db_K} +\pmb{e}_{\fb_L}- \pmb{e}_{(d_1,\ldots, d_{I-1},d_{I}-\ell-1,m,f_{J+1},\ldots, f_L,f_1,\ldots,f_{J-1},f_{J}-m-1, \ell, d_{I+1},\ldots, d_K)}.
\nonumber
\end{align}
where $\pmb{e}_{\db_K}$ denotes the sequence $(0,\ldots,0,1,0,\ldots)$ where the component $1$ appears only at the entry indexed by $\db_K$.
We further define the index $c_{I,J,\ell,h}(\db_K,\fb_M)$ by the formula
\begin{align}
&c_{I,J,\ell,m}(\db_K,\fb_L) \nonumber\\
&= (d_1,\ldots, d_{I-1},d_{I}-\ell-1,m,f_{J+1},\ldots, f_L,f_1,\ldots,f_{J-1},f_{J}-m-1, \ell, d_{I+1},\ldots, d_K),
\nonumber
\end{align}
which is identical to the index on the last term of the above assignments.

\begin{theorem}[Enumeration of partial chord diagrams filtered by their boundary length and point spectrum] \label{thm1} ~\vskip .1in
 \noindent Define the first and second order linear differential operators
\begin{align}
M_{0}&=  \sum_{K\ge 1} \sum_{\db_K} \left( \sum_{1\leq J<I\leq K} \sum_{\ell=0}^{d_I-1}  \sum_{m=0}^{d_J-1} D_{s_{I,J,\ell,m}(\db_K)} + \sum_{ I=1}^K \sum_{\ell,m=0}^{d_I-1}  D_{s_{I,\ell,m}(\db_K)}\right),
\label{M0}\\
M_{2}&=\frac12 \sum_{K,L\ge 1} \sum_{\db_K, \fb_L}  \sum_{I=1}^K \sum_{J=1}^L  \sum_{\ell=0}^{d_I-1}  \sum_{m=0}^{d_J-1}   D_{q_{I,J,\ell,h}(\db_K)},  \label{M2}
\end{align}
and the quadratic differential operator
\begin{align}
S(H)=\frac12 \sum_{K,L\ge 1} \sum_{\db_K, \fb_L}  \sum_{I=1}^K \sum_{J=1}^L  \sum_{\ell=0}^{d_I-1}  \sum_{m=0}^{f_L-1}   u_{c_{I,J,\ell,m}(\db_K, \fb_L)} D_{\db_K}(H)  D_{\fb_L}(H) \;.
\label{S}
\end{align}
Then the following partial differential equations hold
\begin{align}
{{\partial H_1}\over{\partial y}} = (M_0+x^2M_2)H_1,\quad
{{\partial H}\over{\partial y}} = (M_0+x^2M_2+S)H.
\label{PDE_ori}
\end{align}
Together with the initial conditions  
\begin{align}
H_1(x,y=0;\ttt=(t_1);\uuu)=u_{(0)}t_{1},\quad 
H(x,y=0;\ttt;\uuu)=\sum_{i\geq 1} u_{(i)}t_{i},
\label{initial}
\end{align}
they determine the functions $H_1$ and $H$ uniquely.
\end{theorem}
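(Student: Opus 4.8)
The plan is to prove that the two equations in \eqref{PDE_ori} are, coefficient by coefficient, equivalent to a cut-and-join recursion for the numbers $\Mcal_{g,k,l}(\bb,\mb)$, and to establish that recursion by analysing what happens to a connected partial chord diagram when a single distinguished chord is deleted. First I would extract from \eqref{gf} the coefficient of $x^{2g}y^{k-1}\ttt^{\bb}\uuu^{\mb}$ on both sides of $\partial H/\partial y=(M_0+x^2M_2+S)H$. Because none of $M_0,M_2,S$ involves $y$, while $\partial/\partial y$ turns $y^k$ into $ky^{k-1}$, the left-hand side produces $k\,\Mcal_{g,k,l}(\bb,\mb)/b!$, whereas the right-hand side is the operators applied to the $y^{k-1}$-coefficient, i.e. to data of diagrams with one fewer chord. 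Thus the factor $k$ is to be read as a sum over the $k$ chords of a diagram of type $\{g,k,l;\bb;\mb\}$, and the statement to be proved is that the number of such connected diagrams carrying one marked chord equals the number of ways of adjoining a chord to a diagram with $k-1$ chords and at most two connected components so as to produce a connected diagram of this type. Deleting the marked chord and re-adjoining it are mutually inverse, so this count is exactly $k\,\Mcal_{g,k,l}(\bb,\mb)$.

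Next I would classify the deletion of the marked chord $c$ from a connected diagram $D$. The two sides of the ribbon $c$ lie on boundary cycles of $D$, and the relation $2-2g=b-k+n$, with $b$ fixed and $k$ dropping by one, forces exactly three cases (the change in $n$ has fixed parity, so $n$ changes by $\pm1$). If the two sides of $c$ lie on two distinct boundary cycles, deletion merges them and preserves the genus; its reverse, adjoining $c$ to split one cycle into two, is the genus-preserving operator $M_0$, whose terms carry a single derivative $\partial/\partial u_{\db_K}$ and create two new $u$-factors. If the two sides lie on a single cycle and $D\setminus c$ stays connected, deletion splits that cycle and lowers the genus by one; its reverse is $x^2M_2$, whose terms carry two derivatives $\partial/\partial u_{\db_K}\,\partial/\partial u_{\fb_L}$ and one new factor, the prefactor $x^2$ accounting for the jump $x^{2g}=x^2x^{2(g-1)}$. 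Finally, if deletion disconnects $D$ into two connected diagrams, the cycle through both sides of $c$ splits, one part going to each component and the genera adding; its reverse joins a cycle of one connected diagram to a cycle of another, which is the quadratic operator $S$, explaining both its product form $D_{\db_K}(H)\,D_{\fb_L}(H)$ and the absence of any power of $x$. For $H_1$ the term $S$ drops out, since joining two components yields at least two backbones and so never contributes to the one-backbone generating function.

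The heart of the argument, and the step I expect to be the main obstacle, is the verification that the index strings record the transformation of the boundary length and point spectrum exactly. Here I would work on the boundary cycle(s) incident to $c$, written as cyclic tuples $\db_K$ and $\fb_L$ of numbers of marked points separated by steps, each step a chord side or a backbone underside. Adjoining a chord converts two marked points into its feet, so the feet must sit at existing marked points; this is why $\ell$ and $m$ run only to $d_I-1$ and $d_J-1$ and why a segment with no marked point contributes nothing. A direct inspection of the local picture then gives the three string families: two feet in distinct segments of one cycle produce the pair recorded by $s_{I,J,\ell,m}(\db_K)$, two feet in the same segment produce the cycle and the pinched-off loop recorded by $s_{I,\ell,m}(\db_K)$, and one foot in a segment of $\db_K$ and one in a segment of $\fb_L$ produce the single merged cycle recorded by $c_{I,J,\ell,m}(\db_K,\fb_L)$ (common to $M_2$ and $S$). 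In each case one checks the length bookkeeping (the two split cycles have lengths summing to $K+2$, the merged cycle has length $K+L+2$), that the marked-point total drops by two, and that the sums over $I,J,\ell,m$ enumerate every placement exactly once. The genuine difficulty is getting these cyclic index strings right: keeping track of the wrap-around of the segment indices, of which arc receives which marked points, and of the symmetry factors — the $\frac12$ in $M_2$ and in $S$ compensating the two orderings of the unordered pair of cycles being merged, and the $1/b!$ in $H_b$ producing in $S$ the binomial $\binom{b_1+b_2}{b_1}$ that counts the interleavings of the backbones of the two joined components.

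It remains to record the initial conditions and uniqueness. At $y=0$ there are no chords, so a connected diagram is a single backbone of genus zero with one boundary cycle of length one carrying its marked points, which reproduces \eqref{initial}. Finally, each equation in \eqref{PDE_ori} is first order in $y$ and the operators do not raise the power of $y$, so extracting the coefficient of $y^{k-1}$ gives $k\,[y^k]H=(M_0+x^2M_2+S)[y^{k-1}]H$, a triangular recursion expressing the $y^k$-data in terms of the $y^{k-1}$-data. Hence the value at $y=0$ determines $H_1$ and $H$ uniquely, which completes the plan.
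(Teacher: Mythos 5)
Your proposal is correct and takes essentially the same approach as the paper: the paper proves Theorem \ref{thm1} via the cut-and-join recursion of Proposition \ref{prop1}, classifying chord removal into the same three cases you identify (merging two boundary cycles with genus preserved, splitting one cycle with a genus drop, and disconnection), with the same index-string, length, and symmetry-factor bookkeeping, and then converts this to the PDE and the first-order-in-$y$ uniqueness argument in Proposition \ref{prop2}. Your only deviation is presentational: you verify the PDE directly by coefficient extraction, so the multiplicities $(m_{\db_K}+1)$ and the $\delta_{\db_K,\fb_L}$ correction that the paper writes explicitly in the recursion are produced automatically by the $u$-derivatives.
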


In this article, we also consider the non-oriented analogue of partial chord diagrams.
The generalization of the above analysis is straightforward, as we will now explain.
A {\em non-oriented} partial chord diagrams, is a partial chord diagram together with a decoration of a binary variable at each chord, which indicates if the chord is {\em twisted} or not. When associating the surface ${\Sigma}_c$, to a non-oriented partial chord diagram, 
a twisted band is associated along twisted chords as indicated in 
Figure \ref{partial_chord_non}.
By this construction, $2^k$ orientable and non-orientable surfaces are obtained from one partial chord diagram with $k$ chords, when we vary over all assignments of twisting or not to the $k$ chords. 
In the non-oriented case, we have the following definition of the Euler characteristic.

\begin{itemize}
\item Euler characteristic $\chi$.\\
The Euler characteristic of the two dimensional surface $\Sigma_c$
is defined by the formula
\begin{align}
\chi=2-h,
\end{align}
where  $h$ is 
 the number of cross-caps and we have Euler's relation
\begin{align}
 2-h=b-k+n.
\end{align}
\end{itemize}
With this set-up,
the enumeration of the non-oriented partial chord diagrams is considered in parallel to the oriented case discussed above with a small change for the boundary length and point spectrum $\mb$. In this non-oriented case, there are now induced orientation on the boundaries of $\Sigma_c$ and hence for an index $\db_K = (d_1,\ldots,d_K)$ corresponding some boundary component of $\Sigma_c$, we not only need to consider this tuple up to cyclic permutation of the tuple, but also reversal of the order
$$ \db_K = (d_1,d_2,\ldots,d_K) =  (d_K,\ldots d_2,d_1).$$

\begin{figure}[h]
\begin{center}
   \includegraphics[width=120mm,clip]{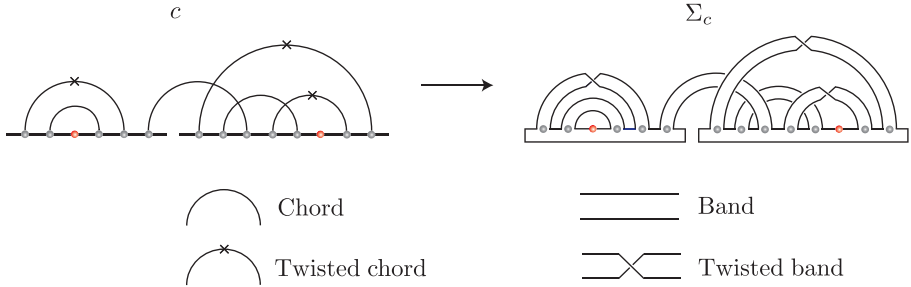}
\end{center}
\caption{\label{partial_chord_non} The non-oriented surface constructed out of untwisted and twisted chords.}
\end{figure}

Let $\widetilde\Mcal_{h,k,l}(\bb,\mb)$ be the number of non-oriented partial chord diagrams of type $\{h,k,l;\bb;\mb\}$.
In \cite{AAPZ},  $\widetilde\Nd_{h,k,l}(\bb,\boldsymbol\ell,\nb)$
is defined as 
the number of non-oriented connected partial chord diagrams of type $\{h,k,l;\bb;\boldsymbol\ell;\nb\}$.
These numbers are related by the following formula 
\begin{align}
\widetilde\Nd_{h,k,l}(\bb,\boldsymbol\ell,\nb) = \sum_{\mb \in M(\boldsymbol\ell,\nb)} \widetilde\Mcal_{h,k,l}(\bb,\mb),
\nonumber
\end{align}
and the numbers $\widetilde\Nd_{h,k,l}(\bb,\nb)$ and $\widetilde{\Nd}_{h,k,b}(\boldsymbol\ell)$ are given by
\begin{align}
&\widetilde\Nd_{h,k,l}(\bb,\nb)=\sum_{\boldsymbol\ell}~ \widetilde\Nd_{h,k,l}(\bb,\boldsymbol\ell,\nb),
\quad
&\widetilde{\Nd}_{h,k,b}(\boldsymbol\ell)=\sum_{\nb} \sum _{\sum b_i=b} ~\widetilde{\Nd}_{h,k,l=0}(\bb,\boldsymbol\ell,\nb).
\nonumber
\end{align}

We define the non-oriented generating function
$\widetilde{H}(x,y;\pmb{t};\pmb{u})=\sum_{b\geq 1} \widetilde{H}_b(x,y;\pmb{t};\pmb{u})$
to be given by
\begin{align}\label{gf_non}
\widetilde{H}_b(x,y;\pmb{t};\pmb{u})=
{1\over b!} \sum_{h=0}^\infty \; \sum_{k=h+b-1}^\infty \: \sum_{\substack{\sum_K\sum_{\db_K} m_{\db_K}\\=k-h-b+2}}\;\sum_{\sum b_i=b}\widetilde{\Mcal}_{h,k,l}(\bb,\mb) x^{h} y^k{\pmb t}^{\bb} {\pmb{ u}}^{\mb}.
\end{align}

We define $s^{\times}_{I,J,\ell,h}(\db_K)$, $s^{\times}_{I,\ell,h}(\db_K)$ and $q^{\times}_{I,J,\ell,h}(\db_K,\fb_L)$ to be by
\begin{align}
&s^{\times}_{I,J,\ell,m}(\db_K)  = \pmb{e}_{\db_K} - \pmb{e}_{(d_1, \ldots, d_{I-1}, \ell,m, d_{J-1}, \ldots, d_{I+1}, d_J -\ell-1, d_J -m-1, d_{J+1}, \ldots, d_K)}, \nonumber\\
&s^{\times}_{I,\ell,m}(\db_K)  = \pmb{e}_{\db_K} - \pmb{e}_{(d_1,\ldots, d_{I-1}, \ell, d_I-\ell-m-2, m, d_{I+1},\ldots, d_K)},
\nonumber\\
&q^{\times}_{I,J,\ell,m}(\db_K,\fb_L) 
\nonumber \\
&= \pmb{e}_{\db_K}  +\pmb{e}_{\fb_M}- \pmb{e}_{(f_{1},\ldots, f_{J-1},f_{J}-m-1,\ell, d_{I-1},\ldots, d_1,d_K,\ldots, d_{I+1}, d_I -\ell-1,m, f_{J+1},\ldots, f_L)}. \nonumber
\end{align}
And we also define indices $c^{\times}_{I,J,\ell,h}(\db_K,\fb_M)$ by the formula
\begin{align}
&c^{\times}_{I,J,\ell,h}(\db_K,\fb_L) \nonumber \\
&= (f_{1},\ldots, f_{J-1},f_{J}-m-1,\ell, d_{I-1},\ldots, d_1,d_K,\ldots, d_{I+1}, d_I -\ell-1,m, f_{J+1},\ldots, f_L),
\nonumber
\end{align}
which again, we note is identical to the index on the last term of the above assignments.

\begin{theorem} [Enumeration of non-oriented partial chord diagrams filtered by their boundary length and point spectrum] \label{thm2} ~\vskip .1in
 \noindent Define the first and second order linear differential operators
\begin{align}
M_{1}^{\times}&=  \sum_{K\ge 1} \sum_{\db_K} \left( \sum_{1\leq J<I\leq K} \sum_{\ell=0}^{d_I-1}  \sum_{m=0}^{d_J-1} D_{s^{\times}_{I,J,\ell,m}(\db_K)} + \sum_{ I=1}^K \sum_{\ell,m=0}^{d_I-1}  D_{s^{\times}_{I,\ell,m}(\db_K)}\right), \label{M1_non}
\\
M_{2}^{\times}&=\frac12 \sum_{K,L\ge 1} \sum_{\db_K, \fb_L}  \sum_{I=1}^K \sum_{J=1}^L  \sum_{\ell=0}^{d_I-1}  \sum_{m=0}^{d_J-1}   D_{q^{\times}_{I,J,\ell,m}(\db_K)},  
\label{M2_non}
\end{align}
and the quadratic differential operator
\begin{align}
S^{\times} (H)=\frac12 \sum_{K,L\ge 1} \sum_{\db_K, \fb_L}  \sum_{I=1}^K \sum_{J=1}^L  \sum_{\ell=0}^{d_I-1}  \sum_{m=0}^{f_L-1}   u_{c^{\times}_{I,J,\ell,m}(\db_K, \fb_L)} D_{\db_K}(H)  D_{\fb_L}(H) \;.
\label{S_non}
\end{align}
Then the following partial differential equations hold
\begin{align}
&{{\partial \widetilde H_1}\over{\partial y}} = (M_0+ xM^{\times}_1+x^2(M_2+M^\times_2))\widetilde H_1,
\nonumber \\
&{{\partial \widetilde H}\over{\partial y}} = (M_0+ xM^{\times}_1+x^2(M_2+M^\times_2) + S + S^\times)\widetilde H.
\label{pde_non}
\end{align}
Together with the following initial conditions 
\begin{align}
\widetilde{H}_1(x,y=0;\ttt=(t_1);\uuu)=u_{(0)}t_{1},\quad 
\widetilde{H}(x,y=0;\ttt;\uuu)=\sum_{i\geq 1} u_{(i)}t_{i},
\label{initial_non}
\end{align}
determines $\widetilde{H}_1$ and $\widetilde{H}$ uniquely. 
\end{theorem}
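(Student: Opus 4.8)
The plan is to establish both equations in (\ref{pde_non}) by the cut-and-join method already used for the orientable Theorem \ref{thm1}, the only genuinely new ingredient being the binary twisting datum attached to each chord. Since the variable $y$ in (\ref{gf_non}) records the chord number $k$, extracting the coefficient of $y^{k-1}$ from $\partial\widetilde H/\partial y$ returns $k$ times the coefficient of $y^k$, and the factor $k$ is to be read as the choice of a distinguished chord in a $k$-chord diagram. First I would fix this combinatorial meaning precisely, setting up a weight-preserving correspondence between connected non-oriented diagrams of type $\{h,k,l;\bb;\mb\}$ carrying one marked chord and the diagrams with one fewer chord obtained by deleting it, under which the two marked points that were its feet are returned to the backbones. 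Because deleting a chord leaves the total number of vertices on each backbone unchanged, the backbone spectrum $\bb$, and hence the monomial $\ttt^{\bb}$, is a spectator throughout; this is precisely why no variable $t_i$ occurs in any of the operators (\ref{M1_non})--(\ref{S_non}).

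Next I would classify the forward operation of re-attaching the distinguished chord according to (i) whether its two feet lie on a single boundary cycle of one connected component, on two distinct cycles of one component, or on cycles of two distinct components, and (ii) whether the chord is untwisted or twisted. The three untwisted possibilities reproduce $M_0$, $x^2M_2$ and $S$ verbatim from Theorem \ref{thm1}: an untwisted band along one cycle splits it into the two cycles recorded by $s_{I,J,\ell,m}(\db_K)$, or by $s_{I,\ell,m}(\db_K)$ when both feet fall in the same block $d_I$, preserving $\chi$; an untwisted band between two cycles of one component fuses them as in $q_{I,J,\ell,m}$, lowering $\chi$ by $2$; and an untwisted band between two components is the quadratic join $S$. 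For the three twisted possibilities I would verify the new strings: attaching a twisted band reverses the induced boundary orientation along the arc it meets, so the resulting cyclic word carries the order-reversed segments displayed in $s^{\times}_{I,J,\ell,m}(\db_K)$, $s^{\times}_{I,\ell,m}(\db_K)$, $q^{\times}_{I,J,\ell,m}(\db_K,\fb_L)$ and $c^{\times}_{I,J,\ell,m}(\db_K,\fb_L)$. Tracking how $(b,k,n)$ changes then fixes the powers of $x$ through $\chi=b-k+n=2-h$: a twisted self-band leaves $n$ fixed and lowers $\chi$ by $1$, producing the single new boundary in $s^{\times}$ and the prefactor $x$ of $M^{\times}_1$; a twisted band fusing two cycles lowers $\chi$ by $2$, giving $x^2M^{\times}_2$; and a twisted join, which merges two cycles into one exactly as the untwisted join does, changes $\chi$ by the same amount as $S$, so that $S^{\times}$ carries no extra power of $x$.

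Summing these local contributions over all connected diagrams, with the derivatives $\partial/\partial u_{\db}$ removing the cycles destroyed by the surgery and the multiplications by $u_{c^{\times}}$ creating the cycle it produces, I would recognize the right-hand side of the recursion as exactly $(M_0+xM^{\times}_1+x^2(M_2+M^{\times}_2)+S+S^{\times})\widetilde H$. Here the factors $\tfrac12$ in $M_2,M^{\times}_2,S,S^{\times}$ absorb the two orderings of the unordered pair of fused cycles, and the $1/b!$ normalization in (\ref{gf_non}) makes the product $D_{\db_K}(\widetilde H)\,D_{\fb_L}(\widetilde H)$ in $S$ and $S^{\times}$ implement the join of two linearly ordered collections of backbones, since $1/(b_1!\,b_2!)$ supplies the interleaving factor for merging them. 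Restricting to the single-backbone sector $b=1$, in which a chord deletion can never disconnect the diagram and hence no join occurs, yields the stated equation for $\widetilde H_1$ without $S$ and $S^{\times}$. Uniqueness then follows by induction on $k$: since none of the operators alters the power of $y$, comparing coefficients of $y^{k-1}$ in (\ref{pde_non}) expresses the degree-$k$ part of $\widetilde H$ (resp.\ $\widetilde H_1$) as $\tfrac1k$ times an expression in strictly lower-degree parts — the quadratic $S,S^{\times}$ only ever multiply two factors each of $y$-degree at most $k-1$ — so the initial data (\ref{initial_non}) at $y=0$ propagate to a unique formal power series solution.

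I expect the main obstacle to be the twisted surgery bookkeeping of the second paragraph: one must check, by tracing each boundary cycle through the twisted band foot by foot, that the cyclic words produced are exactly the order-reversed strings $s^{\times},q^{\times},c^{\times}$ with the indicated summation ranges for $\ell$ and $m$, including the degenerate situations where a block $d_I$ is split across the chord and where a freshly created block has length zero. This is the step where the non-oriented case genuinely departs from Theorem \ref{thm1}, and where an off-by-one in the splitting indices or a mis-orientation of a reversed segment would be easy to commit and must be ruled out with care.
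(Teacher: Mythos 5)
Your proposal is correct and follows essentially the same route as the paper: a cut-and-join analysis classifying the distinguished chord by connectivity case and by twisted/untwisted, with Euler-characteristic bookkeeping fixing the powers of $x$, the $\tfrac12$ and $b!/(b^{(1)}!\,b^{(2)}!)$ factors handled identically, and uniqueness by induction on the $y$-degree from the initial condition (\ref{initial_non}). The only cosmetic difference is that you read the bijection forwards (attaching the marked chord) while the paper reads it backwards (removing it, as in Propositions \ref{prop3} and \ref{prop4}), and you correctly isolate the twisted-string verification of $s^{\times}$, $q^{\times}$, $c^{\times}$ as the one step requiring the careful boundary-tracing that the paper carries out explicitly.
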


This paper is organized as follows. Section \ref{sec2} contains basic combinatorial results
on the boundary length and point spectra of partial chord diagrams and derives the recursion relation of the number of diagrams (Proposition \ref{prop1}), by the cut-and-join method. This cut-and-join equation is rewritten as a second order, non-linear, algebraic partial differential equation for  generating function of the number of partial chord diagrams filtered by the boundary length and point spectrum (Proposition \ref{prop2}).
Section \ref{sec3} extends these results to include the non-oriented analogues of the partial chord diagrams. The cut-and-join equation is extended to provide a recursion on the number of non-oriented partial chord diagrams (Proposition \ref{prop3}), and is also rewritten as  partial differential equation (Proposition \ref{prop4}).

\section{Combinatorial proof of the cut-and-join equation}\label{sec2}
In this section, we devote to prove Theorem \ref{thm1}.
The partial differential equation (\ref{PDE_ori}) is equivalent to the following recursion relation for the numbers of connected partial chord diagrams. 
\begin{proposition}\label{prop1}
The numbers ${\mathcal M}_{g,k,l}(\bb,\mb)$ enumerating connected partial chord diagrams of type $\{g,k,l;\bb,\mb\}$
obey the following recursion relation
\begin{align}
&k\mathcal{M}_{g,k,l}(\bb,\mb)
\nonumber\\
=&\sum_{K\ge 1}\sum_{\db_K}
(m_{\db_K}+1)\biggl[
\sum_{1\le I<J\le K}\sum_{\ell=0}^{d_I-1}\sum_{m=0}^{d_J-1}
{\mathcal M}_{g,k-1,l+2}\left(\bb,\mb+s_{I,J,\ell,m}(\db_K)\right)
\nonumber \\
&\hspace*{3cm}
+\sum_{I=1}^K\sum_{\substack{
\ell,m=0\\
\ell+m\le d_I-2}}^{d_I-1}
{\mathcal M}_{g,k-1,l+2}\left(\bb,\mb+s_{I,\ell,m}(\db_K)\right)
\biggr]
\nonumber \\
&+\frac{1}{2}\sum_{K\ge 1}\sum_{L\ge 1}\sum_{\db_K}\sum_{\fb_L}
(m_{\db_K}+1)(m_{\fb_L}+1-\delta_{\db_K,\fb_L})
\nonumber \\
&\quad\times\sum_{I=1}^K\sum_{J=1}^L\sum_{\ell=0}^{d_I-1}\sum_{m=0}^{f_J-1}
{\mathcal M}_{g-1,k-1,l+2}\left(\bb,\mb+q_{I,J,\ell,m}(\db_K,\fb_L)\right)
\nonumber \\
&+\frac{1}{2}\sum_{K\ge 1}\sum_{L\ge 1}\sum_{\db_K}\sum_{\fb_L}
\sum_{g_1+g_2=g}\;\;\sum_{k_1+k_2=k-1}\;\;
\sum_{b^{(1)}+b^{(2)}=b}
\nonumber \\
&\quad\times
\sum_{I=1}^K\sum_{J=1}^L\sum_{\ell=0}^{d_I-1}\sum_{m=0}^{f_J-1}
\sum_{
\substack{
\mb^{(1)}+\mb^{(2)} \\
=\mb+q_{I,J,\ell,m}(\db_K,\fb_L)
}}
\nonumber \\
&\quad\quad\times
m_{\db_K}^{(1)}m_{\fb_L}^{(2)}
\frac{b!}{b^{(1)}!b^{(2)}!}
{\mathcal M}_{g_1,k_1,l_1}\bigl(\bb^{(1)},\mb^{(1)}\bigr)
{\mathcal M}_{g_2,k_2,l_2}\bigl(\bb^{(2)},\mb^{(2)}\bigr).
\label{cut_join}
\end{align}
\end{proposition}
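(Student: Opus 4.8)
The plan is to prove the recursion by a weighted count of pairs $(c,e)$, where $c$ is a connected partial chord diagram of type $\{g,k,l;\bb,\mb\}$ and $e$ is one of its $k$ chords; there are exactly $k\,\Mcal_{g,k,l}(\bb,\mb)$ such pairs, which is the left-hand side. To each pair I associate the diagram $c'$ obtained by deleting $e$ and turning its two endpoints into marked points. Then $c'$ has $k-1$ chords, $l+2$ marked points, the same backbone spectrum $\bb$, and is either connected or has exactly two connected components. The recursion will be obtained by partitioning the set of pairs according to the topological effect of the deletion and, for each class, computing the fiber of the reduction map $(c,e)\mapsto c'$.

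First I would establish the cut-and-join trichotomy by examining how the two sides of $e$ sit on the boundary cycles of $c$, equivalently how re-inserting $e$ acts on $c'$. (i) If the two endpoints lie on a single boundary cycle of $c'$, re-inserting $e$ splits that cycle in two; an Euler characteristic count ($V$ fixed, $E\to E+1$, number of boundary cycles $+1$) shows the genus is preserved, giving the terms with $\Mcal_{g,k-1,l+2}$ and the strings $s_{I,J,\ell,m}(\db_K)$ and $s_{I,\ell,m}(\db_K)$; the two sub-cases are whether the two consumed marked points lie in distinct groups of the cyclic sequence ($I<J$) or in the same group, whence the constraint $\ell+m\le d_I-2$. (ii) If the two endpoints lie on two distinct cycles of a connected $c'$, re-inserting $e$ merges them and raises the genus by one, giving the $\Mcal_{g-1,k-1,l+2}$ terms with $q_{I,J,\ell,m}(\db_K,\fb_L)$. (iii) If the two endpoints lie on different connected components of $c'$, re-inserting $e$ joins them, with genera, chord numbers and backbone numbers adding; this yields the quadratic term. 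These three cases are exhaustive and mutually exclusive for a given marked chord.

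Next I would verify that the refined boundary length and point spectrum transforms exactly as encoded by the strings $s_{I,J,\ell,m}$, $s_{I,\ell,m}$, $q_{I,J,\ell,m}$ and the index $c_{I,J,\ell,m}$. Fixing an orientation and a base arc on each incident boundary cycle, I would read the cyclic word of marked-point counts and arcs and track how it is cut into two sub-words (case (i); when the cut-off piece is a single arc this produces the entry $d_I-\ell-m-2$) or spliced into one (cases (ii), (iii), producing the long index $c_{I,J,\ell,m}(\db_K,\fb_L)$). Here $\ell$ and $m$ record the positions of the two endpoints inside their groups, with ranges $0\le\ell\le d_I-1$ and $0\le m\le d_J-1$ (respectively $f_J-1$), which is why these summation ranges appear. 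Finally I would account for the multiplicities: $(m_{\db_K}+1)$ and $(m_{\fb_L}+1-\delta_{\db_K,\fb_L})$ count the available cycles of each type in the connected $c'$ after the split or merge, $m^{(1)}_{\db_K}m^{(2)}_{\fb_L}$ counts them in the two-component case, the factor $\tfrac{b!}{b^{(1)}!b^{(2)}!}$ counts the interleavings of the two linearly ordered backbone sets when $c'$ is disconnected, and the two factors $\tfrac12$ correct the double count coming from the ordered choice of the pair of cycles (respectively components) carrying the reinserted chord. Summing the three contributions yields the stated recursion.

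I expect the main obstacle to be the third step: the purely topological trichotomy is the classical cut-and-join dichotomy and follows routinely from the Euler relation $2-2g=b-k+n$, but proving that the spectrum is modified precisely by $s$, $q$ and $c$ -- that the cyclic words split and splice exactly as written, including the degenerate cases where a group has no marked points or a cut-off cycle has length one -- requires careful, case-by-case bookkeeping, and getting every multiplicity and symmetry factor right without over- or under-counting is where the real care is needed. I would control this by phrasing the entire argument as computing the fibers of the explicit reduction map $(c,e)\mapsto c'$, so that each coefficient on the right-hand side is manifestly the number of chords that reduce to a given $c'$ in a given way.
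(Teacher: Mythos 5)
Your proposal is correct and takes essentially the same route as the paper's proof: the paper likewise counts the $k\,\Mcal_{g,k,l}(\bb,\mb)$ choices of a chord to remove, splits into the same three cases (joining two boundary cycles, with the two sub-cases of distinct versus equal clusters and the constraint $\ell+m\le d_I-2$; splitting one cycle with a genus drop; and disconnection), tracks the spectra via $s_{I,J,\ell,m}$, $s_{I,\ell,m}$, $q_{I,J,\ell,m}$, and assigns the identical multiplicities $(m_{\db_K}+1)$, $\tfrac12(m_{\db_K}+1)(m_{\fb_L}+1-\delta_{\db_K,\fb_L})$, and $m^{(1)}_{\db_K}m^{(2)}_{\fb_L}\,b!/(b^{(1)}!\,b^{(2)}!)$. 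Your packaging of the count as computing fibers of the reduction map $(c,e)\mapsto c'$ is a cleaner articulation of exactly the argument the paper gives.
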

This recursion relation is referred to as the \textit{cut-and-join equation}, since it follows from a cut-and-join argument, which we shall now provide.

\begin{proof}
When one removes one chord from a partial chord diagram, there are essentially three distinct possible outcomes. First of all the diagram can stay connected and then there are two cases to consider. In the first one, the chord that is removed is adjacent to two different boundary components and in the second one it is adjacent to just one. The third case is when the chord diagram becomes disconnected.

In the first case, 
the genus of the partial chord diagram is not changed, but two boundary components join into one component. 
On the other hand, in the second case, the genus decreases by one, and one boundary component splits into two components.

\begin{figure}[h]
\begin{center}
   \includegraphics[width=120mm,clip]{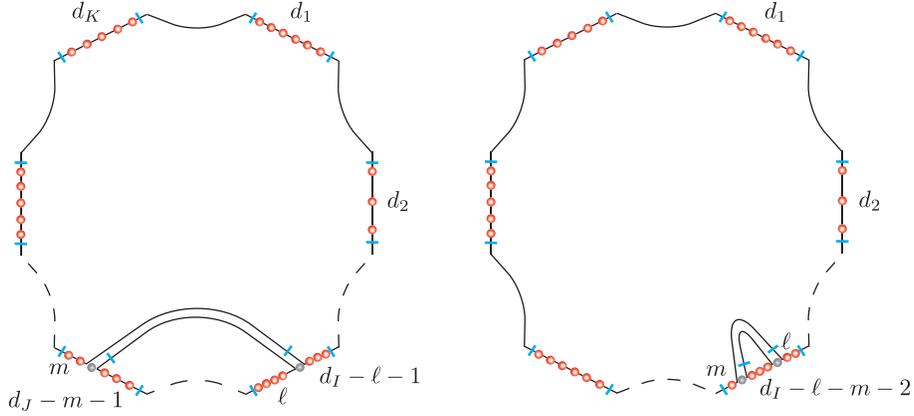}
\end{center}
\caption{\label{comb1} Removal of a chord in case one. The chord is depicted as a band. After the removal of this chord, two boundary components join into one component. 
 Left: The clusters of marked points $(d_I-\ell-1,m)$ and $(d_J-m-1,\ell)$  join into two clusters $d_I$ and $d_J$
  Right:  The clusters of marked points $(\ell,m)$ and $(d_J-m-1)$  join into one cluster $d_I$.}
\end{figure}
In the first case, and let us say that after removing this chord, the two adjacent boundary components join into one component with the marked point spectrum $\db_K=(d_1,\ldots,d_K)$.
 (See Figure \ref{comb1}.)
Under this elimination, the numbers $k$ and $n$ change to $k-1$ and $n-1$,  the genus $g$ is not changed (c.f. Euler's relation $2-2g=b-k+n$). The number of marked points $l$ changes to $l+2$, because the chord ends of the chord which is removed become new marked points.
There are two distinct possible sub cases, namely either the chord ends belong to two distinct clusters  of marked points $d_I$ and $d_J$ in the resulting chord diagram, or chord ends belong to the same cluster of  marked points $d_I$.

We will consider  the former kind of chord, and assume $I<J$ without loss of generality. 
Before we remove the chord, the two boundaries adjacent to the chord needs to have the following two marked point spectra
\begin{align}
&(d_1,\ldots,d_{I-1},d_{I}-\ell-1,m,d_{J+1},\ldots,d_{K}),
 \;\;\mathrm{and}\;\;
 (\ell,d_{I+1},\ldots,d_{J-1},d_{J}-m-1),
\nonumber \\
&0\le \ell\le d_I-1,\quad 0\le m\le d_J-1,
\nonumber
\end{align}
When removing the chord, we connect the clusters of marked points $(d_{I}-\ell-1,m)$ and $(d_{J}-m-1,\ell)$.
If the original partial chord diagram has the boundary length-point spectrum $\mb$,
the resulting diagram has 
\begin{align}
&\mb- \pmb{e}_{(d_1,\ldots,d_{I-1},d_{I}-\ell-1,m,d_{J+1},\ldots,d_{K})} - \pmb{e}_{(\ell,d_{I+1},\ldots,d_{J-1},d_{J}-m-1)} + \pmb{e}_{\db_K} 
\nonumber \\
&=\mb+s_{I,J,\ell,m}(\db_K).
\nonumber
\end{align}

For the latter kind, 
we must have two boundary components with the marked point spectra
\begin{align}
&(d_1,\ldots,d_{I-1},\ell,m,d_{I+1},\ldots,d_{K}),\;\;\mathrm{and}\;\; (d_{I}-\ell-m-2).
\nonumber \\
&0\le \ell,m\le d_I-1,\quad 0 \le \ell+m \le d_I-2, 
\nonumber
\end{align}
and removing the chord connects the clusters of marked points $(\ell,m)$ and $(d_{J}-m-1)$.
This manipulation changes the boundary length and point spectrum $\mb$ into
\begin{align}
\mb- \pmb{e}_{(d_1,\ldots, d_{I-1},\ell,m,d_{I+1}, \ldots, d_K)}  - \pmb{e}_{(d_I -\ell-m-2)} + \pmb{e}_{\db_K}
= \mb+s_{I,\ell,m}(\db_K).
\nonumber
\end{align}

For both of these two kinds of removal, there are $m_{\db_K}+1$ possibilities to choose the boundary components in the partial chord diagram.
Therefore, the number of possibilities for the first way of removal is
\begin{align}
&\sum_{K\ge 1}\sum_{\db_K}
(m_{\db_K}+1)\biggl[
\sum_{1\le I<J\le K}\sum_{\ell=0}^{d_I-1}\sum_{m=0}^{d_J-1}
{\mathcal M}_{g,k-1,l+2}\left(\bb,\mb+s_{I,J,\ell,m}(\db_K)\right)
\nonumber \\
&\hspace*{3cm}
+\sum_{I=1}^K\sum_{\substack{
\ell,m=0\\
\ell+m\le d_I-2}}^{d_I-1}
{\mathcal M}_{g,k-1,l+2}\left(\bb,\mb+s_{I,\ell,m}(\db_K)\right)
\biggr].
\label{p1}
\end{align}

In the second case
(see Figure \ref{comb2}), the removal changes the numbers $k$ and $n$ to $k-1$ and $n+1$ and the genus of the partial chord diagram decreases by one. 
For partial chord diagram with a boundary with marked point spectrum
\begin{align}
&(d_1,\ldots,d_{I-1},d_I-\ell-1,m,f_{J+1},\ldots,f_L,f_1,\ldots,f_{J-1},f_{J}-m-1,\ell,d_{I+1},\ldots,d_{K}),
\nonumber \\
&0\le \ell \le d_I-1,\quad 0\le m\le f_{J}-1,
\nonumber
\end{align}
we remove the chord which connects the two clusters $(f_{J}-m-1,\ell)$ and $(d_I-\ell-1,m)$ of marked points. The boundary component then splits into two boundary components  with marked point spectra $\db_K=(d_1,\ldots,d_K)$ and  $\fb_L=(f_1,\ldots,f_L)$.
If the original partial chord diagram has the boundary length and point spectrum $\mb$, after removal of this chord, we find that
\begin{align}
&\mb- \pmb{e}_{(d_1,\ldots, d_{I-1},d_{I}-\ell-1,m,f_{J+1},\ldots, f_L,f_1,\ldots,f_{J-1},f_{J}-m-1, \ell, d_{I+1},\ldots, d_K)}+ \pmb{e}_{\db_K} +\pmb{e}_{\fb_L}
\nonumber \\
&
=\mb+q_{I,J,\ell,m}(\db_K,\fb_L).
\nonumber
\end{align}

The number of possibilities of this removal is $(m_{\db_K}+1)(m_{\fb_L}+1)$ for $\db_K\ne\fb_L$. 
If  $\db_K=\fb_L$, the number of possibilities becomes $m_{\db_K}(m_{\db_K}+1)/2$.
In total, the number of possibilities for the second way of elimination is
\begin{align}
&\frac{1}{2}\sum_{K=1}^{\infty}\sum_{L=1}^{\infty}\sum_{\db_K}\sum_{\fb_L}
(m_{\db_K}+1)(m_{\fb_L}+1-\delta_{\db_K,\fb_L})
\nonumber \\
&\quad\times\sum_{I=1}^K\sum_{J=1}^L\sum_{\ell=0}^{d_I-1}\sum_{h=0}^{f_J-1}
{\mathcal M}_{g-1,k-1,l+2}\left(\bb,\mb+q_{I,J,\ell,h}(\db_K,\fb_L)\right).
\label{p2}
\end{align}
The factor $1/2$ in front of the sum takes care of the over counting in the cases $\db_K\ne\fb_L$.

\begin{figure}[h]
\begin{center}
   \includegraphics[width=120mm,clip]{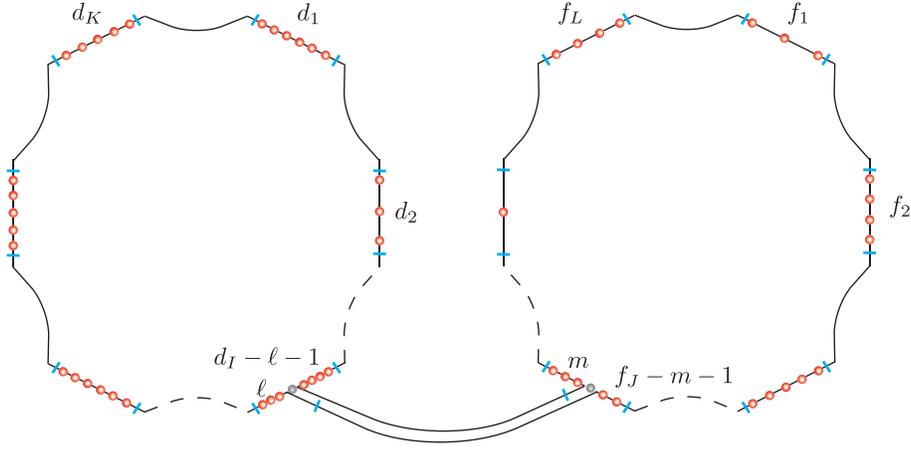}
\end{center}
\caption{\label{comb2} The second and third way of elimination of a chord.
After the elimination of this chord, a boundary component split into  two different boundary components.  }
\end{figure}

In the third case, the partial chord diagram split into two connected components.
We consider the case that the original diagram has the type $\{g,k,l;\bb,\mb\}$ and the resulting two connected components have types $\{g_1,k_1,l_1;\bb^{(1)},\mb^{(1)}\}$ and $\{g_2,k_2,l_2;\bb^{(2)},\mb^{(2)}\}$.
These types are related such that
\begin{align}
g=g_1+g_2,\quad k-1=k_1+k_2,\quad \bb=\bb^{(1)}+\bb^{(2)}.
\nonumber
\end{align}
Since a boundary component also split into two components,
the boundary length and point spectrum changes in the same manner as in the second case.
\begin{align}
\mb+q_{I,J,\ell,m}(\db_K,\fb_L)=\mb^{(1)}+\mb^{(2)}.
\nonumber
\end{align}

There are $m_{\db_K}^{(1)}m_{\fb_L}^{(2)}$ ways to choose the boundary components which are to be fused under the inverse operation of chord removal. And the number of different ordered splittings of a $b$-backbone diagram is $\frac{b!}{b^{(1)}!b^{(2)}!}$ where $b^{(a)}=\sum_ib^{(a)}_i$ ($a=1,2$). Therefore, the total number of possibilities of this case is 
\begin{align}
&\frac{1}{2}\sum_{I=1}^{\infty}\sum_{J=1}^{\infty}\sum_{\db_K}\sum_{\fb_L}
\sum_{g_1+g_2=g}\;\;\sum_{k_1+k_2=k-1}\;\;
\sum_{b^{(1)}+b^{(2)}=b}
\nonumber \\
&\quad\times
\sum_{I=1}^K\sum_{J=1}^L\sum_{\ell=0}^{d_I-1}\sum_{m=0}^{f_J-1}
\sum_{
\substack{
\mb^{(1)}+\mb^{(2)} \\
=\mb+q_{I,J,\ell,m}(\db_K,\fb_L)
}}
\nonumber \\
&\quad\quad\times
m_{\db_K}^{(1)}m_{\fb_L}^{(2)}
\frac{b!}{b^{(1)}!b^{(2)}!}
{\mathcal M}_{g_1,k_1,l_1}\bigl(\bb^{(1)},\mb^{(1)}\bigr)
{\mathcal M}_{g_2,k_2,l_2}\bigl(\bb^{(2)},\mb^{(2)}\bigr).
\label{p3}
\end{align}
The factor $1/2$ corrects for the over counting due to the ordering of the two connected components.

The sum of the contributions  (\ref{p1}), (\ref{p2}), and (\ref{p3}) from the three different cases of chord removals equals $k{\mathcal M}_{g,k,l}(\bb,\mb)$, because  there are $k$ possibilities for the choice of the chord to be removed.
This gives the cut-and-join equation (\ref{cut_join}).
\end{proof}

\begin{proposition}\label{prop2}
The generating function $H(x, y;\ttt,;\uuu)$ is uniquely determined by the differential equation
\begin{align}
{{\partial H}\over{\partial y}} = (M+S)H,
\nonumber
\end{align}
where $M=M_0+x^2M_2$. 
The generating function $Z(x, y;\ttt,;\uuu)=\mathrm{exp}[H]$ of the number of connected and disconnected partial chord diagrams
satisfies 
\begin{align}
\frac{\partial Z}{\partial y}= MZ,
\end{align}
and is as such determined by the initial conditions
\begin{align}
H(x,y=0;\ttt;\uuu)=\sum_{i\geq 1} t_{i}u_{(i)}, \quad Z(x, y=0;\ttt,;\uuu)=\mathrm{e}^{\sum_{i\geq 1} t_{i}u_{(i)}}.
\nonumber
\end{align}
\end{proposition}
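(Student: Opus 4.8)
The plan is to deduce Proposition~\ref{prop2} from the already-proven cut-and-join recursion of Proposition~\ref{prop1} in three moves: translate the recursion into the stated PDE for $H$ by comparing Taylor coefficients, pass to $Z=\exp[H]$ by the Leibniz rule, and settle uniqueness by an order-by-order argument in $y$. For the first move I would substitute the series (\ref{gf}) into $\partial H/\partial y=(M+S)H$ and extract the coefficient of $x^{2g}y^{k-1}\ttt^{\bb}\uuu^{\mb}$. The left-hand side produces $k\,\Mcal_{g,k,l}(\bb,\mb)$, since $\partial/\partial y$ sends $y^k\mapsto k\,y^{k-1}$, and this is exactly the left-hand side of (\ref{cut_join}). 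For the right-hand side I would record the action of the operators $D_{\pb}$ on monomials: for $s=s_{I,J,\ell,m}(\db_K)=\pmb e_{\db_K}-\pmb e_A-\pmb e_B$ (with $A,B$ the two index strings appearing in its definition) one has $D_{s}=u_A u_B\,\partial/\partial u_{\db_K}$, so that $D_{s}(\uuu^{\mb'})=m'_{\db_K}\,\uuu^{\mb'-s}$; choosing $\mb'=\mb+s$ supplies the multiplier $m_{\db_K}+1$ and the shift $(g,k,l)\mapsto(g,k-1,l+2)$, reproducing (\ref{p1}). Likewise $x^2M_2$ reproduces (\ref{p2}), the factor $x^2$ encoding $g\mapsto g-1$ and the second derivative $\partial^2/\partial u_{\db_K}\partial u_{\fb_L}$ generating the multiplier; and the quadratic operator $S$, evaluated on $H=\sum_b H_b$, reproduces the disconnected term (\ref{p3}), the product $D_{\db_K}(H)\,D_{\fb_L}(H)$ creating the factor $m^{(1)}_{\db_K}m^{(2)}_{\fb_L}$ and the backbone bookkeeping of (\ref{gf}) supplying $b!/(b^{(1)}!\,b^{(2)}!)$. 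Matching here requires care with the factors $\tfrac12$ and the diagonal $\delta_{\db_K,\fb_L}$, which reconcile ordered versus unordered counting and the coincident case $\db_K=\fb_L$.

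The second move passes to $Z=\exp[H]$. Differentiating gives $\partial Z/\partial y=(\partial H/\partial y)\,Z=\left((M+S)H\right)Z$, so it suffices to prove $M(e^H)=\left(MH+S(H)\right)e^H$. Since $M_0$ is a first-order operator, $M_0(e^H)=(M_0H)e^H$. For the second-order part I would use $D_{q_{I,J,\ell,m}(\db_K,\fb_L)}=u_{c_{I,J,\ell,m}(\db_K,\fb_L)}\,\partial^2/\partial u_{\db_K}\partial u_{\fb_L}$, which holds because $q=\pmb e_{\db_K}+\pmb e_{\fb_L}-\pmb e_{c}$, together with the identity $\partial_{u_{\db_K}}\partial_{u_{\fb_L}}e^H=\left(\partial_{u_{\db_K}}\partial_{u_{\fb_L}}H+\partial_{u_{\db_K}}H\,\partial_{u_{\fb_L}}H\right)e^H$. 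Summing over all indices, the first term rebuilds $M_2H$, while the cross term, weighted by $u_{c_{I,J,\ell,m}(\db_K,\fb_L)}$, is exactly the operator $S$ of (\ref{S}). Hence $M(e^H)=\left(MH+S(H)\right)e^H$, and combined with $\partial Z/\partial y=\left((M+S)H\right)e^H$ this yields $\partial Z/\partial y=MZ$. Combinatorially, this linearity reflects that the full, possibly disconnected, count $Z$ treats the split/merge of a single boundary uniformly, the nonlinear piece $S$ being precisely the part that the Leibniz rule peels off the second-order operator through the exponential.

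For the third move, uniqueness, I would expand $H=\sum_{k\ge0}H^{(k)}y^k$. Since the operators $M_0,M_2,S$ carry no explicit $y$, comparing coefficients of $y^k$ gives $(k+1)H^{(k+1)}=MH^{(k)}+\left[\,S\text{-bilinear expression in }H^{(0)},\dots,H^{(k)}\,\right]$, which determines $H^{(k+1)}$ from strictly lower orders; thus the datum $H^{(0)}=H|_{y=0}=\sum_{i\ge1}t_iu_{(i)}$, recording the chordless one-backbone diagrams, fixes $H$ uniquely, and the linear recursion $(k+1)Z^{(k+1)}=MZ^{(k)}$ fixes $Z$ from $Z|_{y=0}=e^{\sum_{i\ge1}t_iu_{(i)}}$. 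I expect the main obstacle to be the second move, namely verifying that the Leibniz cross-term of the second-order operator is \emph{exactly} $S$: one must check that the shift index $c_{I,J,\ell,m}(\db_K,\fb_L)$, the multiplier $u_{c_{I,J,\ell,m}(\db_K,\fb_L)}$, the summation ranges, and the symmetry factor $\tfrac12$ all coincide with (\ref{S}), so that the ``join two distinct connected components'' move of (\ref{p3}) is identified with the quadratic part of $M\,e^H$. The bookkeeping of the diagonal $\db_K=\fb_L$ contributions and of the genus weight $x^2$ attached to $M_2$ is the delicate point, and it is here that the correspondence between the combinatorics and the differential operators must be pinned down precisely.
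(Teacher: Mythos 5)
Your three moves are exactly the paper's proof: move 1 (coefficient extraction identifying the PDE for $H$ with the cut-and-join recursion (\ref{cut_join})), move 3 (order-by-order uniqueness in $y$ from the $y=0$ data, with the same combinatorial justification of the initial condition), and the intended content of move 2, which the paper compresses into the single identity $D_{\db_K}(H)D_{\fb_L}(H)+D_{\db_K}D_{\fb_L}H=\frac{1}{Z}D_{\db_K}D_{\fb_L}Z$. Moves 1 and 3 are correct as you describe them.

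Move 2, however, has a genuine gap, and it sits exactly at the point you yourself flagged as delicate. The identity you assert, $M(e^H)=\bigl(MH+S(H)\bigr)e^H$, is false. Since $M_0$ is first order and $D_{q_{I,J,\ell,m}(\db_K,\fb_L)}=u_{c_{I,J,\ell,m}(\db_K,\fb_L)}\,\partial^2/\partial u_{\db_K}\partial u_{\fb_L}$, the Leibniz rule gives
\begin{align}
M(e^H)=M_0(e^H)+x^2M_2(e^H)=\bigl(M_0H+x^2M_2H+x^2\,S(H)\bigr)e^H,
\nonumber
\end{align}
that is, the quadratic cross-term inherits the prefactor $x^2$ with which $M_2$ enters $M$. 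Comparing with $\partial Z/\partial y=(\partial H/\partial y)e^H=\bigl(M_0H+x^2M_2H+S(H)\bigr)e^H$, the two sides differ by $(1-x^2)S(H)e^H$, which is genuinely nonzero: already at order $y^0$ one has $S(H)\supset\tfrac12\,u_{(i-\ell-1,m,j-m-1,\ell)}t_it_j\neq 0$. Nor can the $x^2$ be moved onto $S$ in the $H$-equation: the recursion (\ref{cut_join}) forces $S$ to carry no power of $x$, because in the disconnecting case the genera add, $g=g_1+g_2$. So your argument as written does not establish $\partial Z/\partial y=MZ$ for $Z=e^H$ --- and you should be aware that the paper's own proof is equally cursory here and suffers the same mismatch; the equation is in fact false for $Z=e^H$. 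The repair is to weight by Euler characteristic rather than genus: with $Z=\exp[x^{-2}H]$ the cross-term of $x^2M_2$ acting on $Z$ carries $x^2\cdot x^{-4}=x^{-2}$, matching the $x^{-2}S(H)$ arising from $x^{-2}\,\partial H/\partial y$, and then $\partial Z/\partial y=MZ$ holds. Conceptually, $x^{-\chi}$ (unlike $x^{2g}$) is multiplicative under disjoint union, so joining two boundary components costs $x^{2}$ uniformly, whether or not they lie on the same connected component; that uniformity is what a linear equation for $Z$ requires.
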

\begin{proof}
It is straightforward to check that the differential equation ${{\partial H}\over{\partial y}} = (M+S)H$ is equivalent to
the cut-and-join equation (\ref{cut_join}). The actions in the quadratic differential $S$ on $H$ can be rewritten by
following relation
\begin{align}
D_{\db_K}(H)D_{\fb_L}(H)+D_{\db_K}D_{\fb_L}H=\frac{1}{Z}D_{\db_K}D_{\fb_L}Z.
\nonumber 
\end{align}
The derivatives on the right hand side are contained in $M_2$, and the differential equation $\frac{\partial Z}{\partial y}= MZ$ follows from that of $H$.  
 
On the initial condition, every partial chord diagram of type $\{g, k, l;\bb; \mb\}$ can be obtained from the disjoint collection of type $\{0, 0, i; \pmb{e}_i, \pmb{e}_{(i)}\}$ with multiplicity $b_i$ by connecting them with $k$ chords. This implies $H(x,y=0;\ttt;\uuu)=\sum_{i\geq 1} t_{i}u_{(i)}$. Since this is the first order differential equation of $y$, the coefficient of $y^k$ is determined uniquely using this initial condition.

\end{proof}


\section{Non-oriented analogue of the cut-and-join equation}\label{sec3}
In this section, we will prove Theorem \ref{thm2}. We first establish the following proposition.

\begin{proposition}\label{prop3}
The number $\widetilde{\mathcal M}_{g,k,l}(\bb,\mb)$ of connected non-oriented partial chord diagrams of type $\{g,k,l;\bb,\mb\}$ obeys the following recursion relation
\begin{align}
&k\widetilde{\mathcal M}_{g,k,l}(\bb,\mb)
\nonumber \\
=&\sum_{K\ge1}\sum_{\db_K}
(m_{\db_K}+1)
\nonumber \\
&\times
\Biggl[
\sum_{I<J}\sum_{\ell=0}^{d_I-1}\sum_{m=0}^{d_J-1}
\Bigl\{
\widetilde{\mathcal M}_{h,k-1,l+2}\left(\bb,\mb+s_{I,J,\ell,m}(\mb)\right)
+\widetilde{\mathcal M}_{h-1,k-1,l+2}\bigl(\bb,\mb+s^{\times}_{I,J,\ell,m}(\mb)\bigr)
\Bigr\}
\nonumber \\
&\;
+\sum_{I=1}^K\sum_{\ell+m\le d_I-2}
\Bigl\{
\widetilde{\mathcal M}_{h,k-1,l+2}\left(\bb,\mb+s_{I,\ell,m}(\mb)\right)
+\widetilde{\mathcal M}_{h-1,k-1,l+2}\bigl(\bb,\mb+s^{\times}_{I,\ell,m}(\mb)\bigr)
\Bigr\}
\biggr]
\nonumber \\
&+\frac{1}{2}\sum_{K\ge 1}\sum_{L\ge 1}\sum_{\db_K}\sum_{\fb_L}
(m_{\db_K}+1)(m_{\fb_L}+1-\delta_{\db_K,\fb_L})
\nonumber \\
&\;
\times\sum_{I=1}^K\sum_{J=1}^L\sum_{\ell=0}^{d_I-1}\sum_{m=0}^{f_J-1}
\Bigl\{
\widetilde{\mathcal M}_{h-2,k-1,l+2}\left(\bb,\mb+q_{I,J,\ell,m}(\db_K,\fb_L)\right)
\nonumber \\
&\quad\;
\hspace*{3cm}
+\widetilde{\mathcal M}_{h-2,k-1,l+2}\bigl(\bb,\mb+q^{\times}_{I,J,\ell,m}(\db_K,\fb_L)\bigr)
\Bigr\}
\nonumber \\
&+\frac{1}{2}\sum_{K\ge 1}\sum_{L\ge 1}\sum_{\db_K}\sum_{\fb_L}
\sum_{h_1+h_2=h}\sum_{k_1+k_2=k-1}
\sum_{b^{(1)}+b^{(2)}=b}
\nonumber \\
&\quad\times
\sum_{I=1}^K\sum_{J=1}^L\sum_{\ell=0}^{d_I-1}\sum_{m=0}^{f_J-1}
\vspace{0.3cm}
\left(
\sum_{
\substack{
\mb^{(1)}+\mb^{(2)} \\
=\mb+q_{I,J,\ell,m}(\db_K,\fb_L)
}}
+\sum_{
\substack{
\mb^{(1)}+\mb^{(2)} \\
=\mb+q^{\times}_{I,J,\ell,m}(\db_K,\fb_L)
}}
\right)
m_{\db_K}^{(1)}m_{\fb_L}^{(2)}
\nonumber \\
&\quad\times
\frac{b!}{b^{(1)}!b^{(2)}!}
\widetilde{\mathcal M}_{h_1,k_1,l_1}\bigl(\bb^{(1)},\mb^{(1)}\bigr)
\widetilde{\mathcal M}_{h_2,k_2,l_2}\bigl(\bb^{(2)},\mb^{(2)}\bigr).
\label{cut_join_non}
\end{align}
\end{proposition}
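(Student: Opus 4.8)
The plan is to repeat the cut-and-join argument of Proposition \ref{prop1} essentially verbatim, now carrying along the extra binary datum recording whether each chord is twisted. (Note that the index on the left-hand side should read $k\widetilde{\mathcal M}_{h,k,l}(\bb,\mb)$, with the number of cross-caps $h$ in place of the genus.) Fixing a connected non-oriented diagram $D$ of type $\{h,k,l;\bb,\mb\}$ together with one of its $k$ chords $c$ produces exactly $k\widetilde{\mathcal M}_{h,k,l}(\bb,\mb)$ marked pairs $(D,c)$. Deleting $c$ yields a diagram $D'$ with $k-1$ chords, and I would organize the pairs by the same trichotomy as in the orientable case — $D'$ connected with one boundary cycle fewer, $D'$ connected with one boundary cycle more, or $D'$ disconnected — refining each class according to whether $c$ was twisted. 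As before, deletion sends $k\mapsto k-1$ and $l\mapsto l+2$ and leaves the number of backbones unchanged.

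For the untwisted sub-cases an untwisted band behaves exactly like an orientable chord, so the three contributions of Proposition \ref{prop1} are reproduced without change: deleting $c$ either merges two boundary cycles of $D$ (spectrum change $s_{I,J,\ell,m}$ or $s_{I,\ell,m}$), splits one cycle of the still-connected $D'$ into two (change $q_{I,J,\ell,m}$), or splits one cycle while disconnecting $D$ (again $q_{I,J,\ell,m}$). I would track the cross-cap number through the non-oriented Euler relation $2-h=b-k+n$: writing $\Delta n=n(D')-n(D)$ for the change in the number of boundary cycles, one gets $h(D')=h(D)-\Delta n-1$, so merging ($\Delta n=-1$) leaves $h$ unchanged and splitting ($\Delta n=+1$) yields $h-2$, while additivity of $2-\chi$ over the components of a disconnected $D'$ gives $h=h_1+h_2$. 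These are precisely the $\widetilde{\mathcal M}_{h,k-1,l+2}$, $\widetilde{\mathcal M}_{h-2,k-1,l+2}$ and $\sum_{h_1+h_2=h}$ product terms carrying $s$ and $q$ in \eqref{cut_join_non}.

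The twisted sub-cases are where the genuinely new bookkeeping enters, and here I would use two structural facts. First, because traversing a twisted band flips the induced boundary orientation, a twisted chord always meets a single boundary cycle $C$ of $D$, and deleting it re-connects $C$ with a reversal of orientation along the intervening stretch; the reversed index sub-sequences in the definitions of $s^{\times}_{I,J,\ell,m}$, $s^{\times}_{I,\ell,m}$ and $q^{\times}_{I,J,\ell,m}$ record exactly this reversal, which I would confirm by tracing the re-glued boundary. Second, the re-connection either leaves $C$ a single cycle or breaks it into two. In the \emph{preserving} case $\Delta n=0$, so $h(D')=h-1$ — a shift with no orientable analogue — producing the $\widetilde{\mathcal M}_{h-1,k-1,l+2}$ terms attached to $s^{\times}$ in the first group of \eqref{cut_join_non}. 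In the \emph{breaking} case $\Delta n=+1$ exactly as for the untwisted split, so the connected sub-case again carries $h-2$ and the disconnected sub-case $h_1+h_2$, now decorated by $q^{\times}$.

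Finally I would observe that twisting affects none of the selection multiplicities: the factor $(m_{\db_K}+1)$ for choosing a $\db_K$-cycle, the factor $(m_{\db_K}+1)(m_{\fb_L}+1-\delta_{\db_K,\fb_L})$ for an ordered pair of cycles, the factor $m^{(1)}_{\db_K}m^{(2)}_{\fb_L}\tfrac{b!}{b^{(1)}!b^{(2)}!}$ in the disconnected case, and the symmetry factors $\tfrac12$ are all identical to Proposition \ref{prop1}. Summing the untwisted contributions of the second paragraph with the twisted contributions of the third, over all three cases, reproduces the right-hand side of \eqref{cut_join_non}, and equating with the $k$ choices of $c$ completes the argument. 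The main obstacle I anticipate is precisely the content of the third paragraph: propagating the orientation reversal through a twisted band to pin down the exact reversed sub-tuples in $s^{\times}$ and $q^{\times}$, and making rigorous the preserve-versus-break dichotomy for the twisted re-connection together with its cross-cap accounting ($h-1$ versus $h-2$). Everything else is a faithful transcription of the orientable argument.
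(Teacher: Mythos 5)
Your strategy is the paper's (cut-and-join with the same trichotomy and the Euler bookkeeping $h(D')=h(D)-\Delta n-1$, which is correct), but the two ``structural facts'' your third paragraph rests on are both false, and they are exactly the load-bearing steps. Counterexample: one backbone with four vertices, chord $X$ joining vertices $1,3$, chord $Y$ joining vertices $2,4$ (so $X$ and $Y$ cross), both twisted. Tracing the boundary gives exactly two boundary cycles, and the two sides of the twisted chord $X$ lie on \emph{different} cycles; deleting $X$ leaves the M\"obius band (backbone plus twisted $Y$), which has one boundary cycle, so this deletion \emph{merges} two cycles: $\Delta n=-1$ and $h$ is unchanged. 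Hence a twisted chord need not meet a single boundary cycle, and its deletion can produce precisely the $\widetilde{\mathcal M}_{h,k-1,l+2}$, $s$-pattern contribution that you (and the paper) attribute exclusively to untwisted chords. Dually, keep $Y$ twisted but make $X$ untwisted: the resulting surface has a single boundary cycle, and deleting the untwisted $X$ is a reconnection with $\Delta n=0$, $h\mapsto h-1$ --- so untwisted chords do not always ``behave exactly like orientable chords'' either. The underlying issue is that on a non-orientable diagram the effect of deleting a chord (merge, reconnect, or split, with direct or reversed interleaving) is not determined by that chord's own decoration; it depends on whether the boundary of $D-c$ traverses the two attachment arcs in the same or in opposite planar directions, i.e.\ on the topology of the complement.

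The recursion itself is still true, and the repair is to organize the bijection by topological effect rather than by decoration. The needed pairing lemma is: fix a diagram $D'$ with $k-1$ chords and a pair of its marked points; there are exactly two ways to attach a chord there (untwisted or twisted), and these two attachments always realize the two complementary surgeries on the boundary $1$-manifold --- if the two points lie on one cycle of $D'$, one attachment splits it into the two arcs ($s$-type spectrum change, $h_D=h_{D'}$) and the other reconnects it with one arc reversed ($s^{\times}$-type change, $h_D=h_{D'}+1$); if they lie on two different cycles, or on different connected components, one attachment merges them with direct interleaving ($q$-type) and the other with reversed interleaving ($q^{\times}$-type), both with $h_D=h_{D'}+2$ (resp.\ the disconnected bookkeeping). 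Which decoration realizes which surgery is irrelevant, because the right-hand side of \eqref{cut_join_non} sums over both; with this lemma every marked pair $(D,c)$ is counted exactly once and the argument closes. For comparison, the paper's own proof assigns untwisted removals to the $s,q$ terms and twisted removals to the $s^{\times},q^{\times}$ terms and thus glosses over the same point; your proposal is faithful to that approach but makes the gap explicit by promoting the false twisted/untwisted dichotomy to a stated fact, so as written the step ``a twisted chord always meets a single boundary cycle'' would fail.
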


\begin{proof}
If we remove a non-twisted chord, then we find the same recursive structure as for the numbers (\ref{p1}), (\ref{p2}), and (\ref{p3}) for $\widetilde{\mathcal M}_{h,k,l}\bigl(\bb,\mb\bigr)$ in the oriented case.
As we did in the proof of proposition \ref{prop1}, we also consider three cases, organised the same way, when removing a twisted chord.

In the first case (see Figure \ref{comb1_non}),
there are again two possibilities, namely the twisted chord ends belong to two different or the same clusters of marked points on the boundary component in the resulting diagram after removal.
Contrary to the case of non-twisted chords, the boundary cycle does not split, but the marked point spectrum changes due to the recombination of the boundary component.
For both of these two cases, the numbers $k$ and $n$ change to $k-1$ and $n$, and the cross-cap number $h$ decreases by one under this elimination (c.f. Euler's relation $2-h=b-k+n$). The chord ends become marked points and $l$ changes to $l+2$.

\begin{figure}[h]
\begin{center}
   \includegraphics[width=120mm,clip]{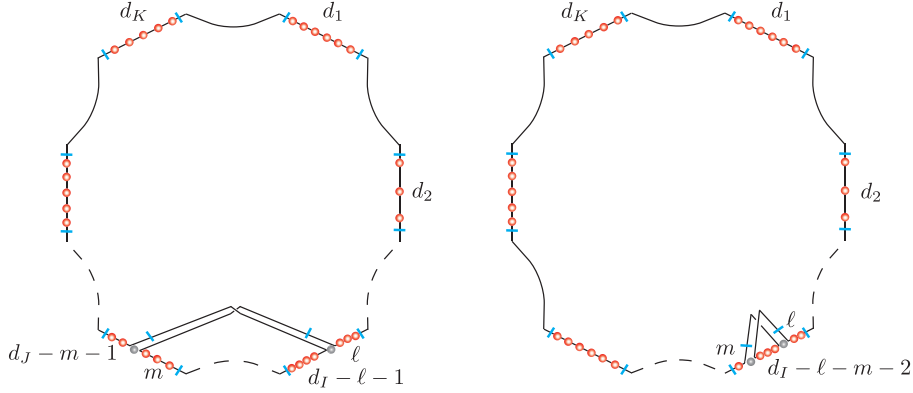}
\end{center}
\caption{\label{comb1_non} Removal of a twisted chord from a non-oriented partial chord diagram. The chord is depicted as a twisted band. After the elimination of this chord, the boundary component is reconnected into one component with different marked point spectrum.
 Left: The clusters of marked points $(d_I-\ell-1,m)$ and $(d_J-m-1,\ell)$  join into two clusters $d_I$ and $d_J$. Right:  The clusters of marked points $(\ell,m)$ and $(d_J-m-1)$  join into one cluster $d_I$.}
\end{figure}

In the former situation, we must have a boundary component with the marked point spectrum 
\begin{align}
&(d_1, \ldots, d_{I-1}, \ell,m, d_{J-1}, d_{J-2}\ldots, d_{I+1}, d_I -\ell-1, d_J -m-1, d_{J+1}, \ldots, d_K)
\nonumber \\
&I<J,\quad 0\le \ell \le d_I-1,\quad 0\le m \le d_J-1,
\nonumber
\end{align}
from which we remove one twisted chord with one end between the two clusters $(\ell,m)$ and the other between $(d_I -\ell-1, d_J -m-1)$.
Then the removal will result in a boundary component with the marked point spectrum $\db_K$ and the boundary length and point spectrum $\mb$ is changed as follows
\begin{align}
&\mb - \pmb{e}_{(d_1, \ldots, d_{I-1}, \ell,m, d_{J-1}, \ldots, d_{J+1}, d_J -\ell-1, d_J -m-1, d_{J+1}, \ldots, d_K)}+ \pmb{e}_{\db_K}
\nonumber \\
&=\mb+s^{\times}_{I,J,\ell,m}(\db_K).
\nonumber
\end{align}
The possible number of choices for this kind of removal is $m_{\db_K}+1$, and the total number of diagrams which can be obtained in this way is
\begin{align}
&\sum_{K\ge1}\sum_{\db_K}
(m_{\db_K}+1)
\sum_{I<J}\sum_{\ell=0}^{d_I-1}\sum_{m=0}^{d_J-1}
\widetilde{\mathcal M}_{h-1,k-1,l+2}\bigl(\bb,\mb+s^{\times}_{I,J,\ell,m}(\mb)\bigr).
\label{p1_non}
\end{align}

For the removal of the latter kind of twisted chords, 
we must start with a diagram with a boundary component with the marked point spectrum 
\begin{align}
&(d_1,\ldots, d_{I-1}, \ell, d_I-\ell-m-2, m, d_{I+1},\ldots, d_K),
\nonumber \\
&0\le \ell,m\le d_I,\quad \ell+m\le d_I-2.
\nonumber
\end{align}
from which we remove one twisted chords with one end between the two clusters $(\ell,d_I-\ell-m-2)$ and the other one between the two clusters $(d_I-\ell-m-2,m)$. After removal, we obtain a boundary component with the marked point spectrum $\db_K$. Thus,  the boundary length and point spectrum $\mb$ is changed to
\begin{align}  
\mb- \pmb{e}_{(d_1,\ldots, d_{I-1}, \ell, d_I-\ell-m-2, m, d_{I+1},\ldots, d_K)}+ \pmb{e}_{\db_K}
=\mb+s^{\times}_{I,\ell,m}(\db_K).
\nonumber
\end{align}
The number of such chords to be removed is $m_{\db_K}+1$, and the total number of partial chord diagrams obtained in this way is
\begin{align}
\sum_{K\ge1}\sum_{\db_K}
(m_{\db_K}+1)
\sum_{I=1}^K\sum_{\ell+m\le d_I-2}
\widetilde{\mathcal M}_{h-1,k-1,l+2}\bigl(\bb,\mb+s^{\times}_{I,\ell,m}(\mb)\bigr).
\label{p2_non}
\end{align}

Next, we consider the second case (see Figure \ref{comb2_non}), where we must start with a non-oriented partial chord diagram with a boundary component with the marked point spectrum
\begin{align}
&(f_{1},\ldots, f_{J-1},f_{J}-m-1,\ell, d_{I-1},\ldots, d_1,d_K,\ldots, d_{I+1}, d_I -\ell-1,m, f_{J+1},\ldots, f_L),
\nonumber \\
&0\le \ell\le d_I-1,\quad 0\le m\le f_J-1,
\nonumber
\end{align}
from which we remove a twisted chord with one end between the two clusters $(f_{J}-m-1,\ell)$ and the other end between the two clusters $(d_I -\ell-1,m)$.
After removal of this chord, the boundary component has been split into two components with spectra $\db_K$ and $\fb_L$, and the cross-cap number $h$ decreases by two.
Then, the boundary length and point spectrum $\mb$ changes to
\begin{align}
&\mb- \pmb{e}_{(f_{1},\ldots, f_{J-1},f_{J}-m-1,\ell, d_{I-1},\ldots, d_1,d_K,\ldots, d_{I+1}, d_I -\ell-1,m, f_{J+1},\ldots, f_L)}+ \pmb{e}_{\db_K}  +\pmb{e}_{\fb_L}
\nonumber \\
&=\mb+q^{\times}_{I,J,\ell,m}(\db_K,\fb_L).
\nonumber
\end{align}
The number of choices for the chord to be removed is  $(m_{\db_K}+1)(m_{\fb_L}+1)$ for $\db_K\ne \fb_L$ and $m_{\db_K}(m_{\db_K}+1)/2$ for $\db_K= \fb_L$, and the total number of partial chord diagrams obtained this way is
\begin{align}
&\frac{1}{2}\sum_{K\ge 1}\sum_{L\ge 1}\sum_{\db_K}\sum_{\fb_L}
(m_{\db_K}+1)(m_{\fb_L}+1-\delta_{\db_K,\fb_L})
\nonumber \\
&\;
\times\sum_{I=1}^K\sum_{J=1}^L\sum_{\ell=0}^{d_I-1}\sum_{m=0}^{f_J-1}
\widetilde{\mathcal M}_{h-2,k-1,l+2}\bigl(\bb,\mb+q^{\times}_{I,J,\ell,m}(\db_K,\fb_L)\bigr).
\label{p3_non}
\end{align}

\begin{figure}[h]
\begin{center}
   \includegraphics[width=120mm,clip]{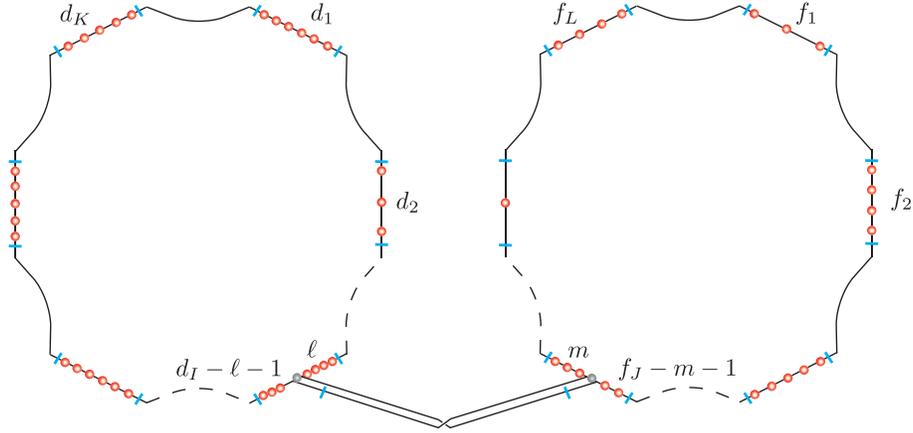}
\end{center}
\caption{\label{comb2_non} 
The second case of a twisted chord removal.
After the removal of this chord, the boundary component split into two distinct boundary components. }
\end{figure}

In case three partial chord diagram split into two connected components when we remove the chord.
Assume that the original diagram has the type $\{h,k,l;\bb,\mb\}$ and the resulting two connected components have types $\{h_1,k_1,l_1;\bb^{(1)},\mb^{(1)}\}$ and $\{h_2,k_2,l_2;\bb^{(2)},\mb^{(2)}\}$.
Then these types are related by
\begin{align}
h=h_1+h_2,\quad k-1=k_1+k_2,\quad \bb=\bb^{(1)}+\bb^{(2)}.
\nonumber
\end{align}
The marked point spectrum changes in the same way as the second case 
\begin{align}
\mb+q^{\times}_{I,J,\ell,m}(\db_K,\fb_L)=\mb^{(1)}+\mb^{(2)}.
\nonumber
\end{align}
The total number of resulting diagrams is
\begin{align}
&\frac{1}{2}\sum_{K\ge 1}\sum_{L\ge 1}\sum_{\db_K}\sum_{\fb_L}
\sum_{h_1+h_2=h}\sum_{k_1+k_2=k-1}
\sum_{b^{(1)}+b^{(2)}=b}
\nonumber \\
&\quad\times
\sum_{I=1}^K\sum_{J=1}^L\sum_{\ell=0}^{d_I-1}\sum_{m=0}^{f_J-1}
\vspace{0.3cm}
\sum_{
\substack{
\mb^{(1)}+\mb^{(2)} \\
=\mb+q^{\times}_{I,J,\ell,m}(\db_K,\fb_L)
}}
\nonumber \\
&\quad\times
m_{\db_K}^{(1)}m_{\fb_L}^{(2)}
\frac{b!}{b^{(1)}!b^{(2)}!}
\widetilde{\mathcal M}_{h_1,k_1,l_1}\bigl(\bb^{(1)},\mb^{(1)}\bigr)
\widetilde{\mathcal M}_{h_2,k_2,l_2}\bigl(\bb^{(2)},\mb^{(2)}\bigr).
\label{p4_non}
\end{align}

Therefore, in total, the number of possible partial chord diagrams obtained by removing a twisted or a  non-twisted  chord is the sum of (\ref{p1_non}) -- (\ref{p4_non}) and  of (\ref{p1}) -- (\ref{p3}) for $\widetilde{\mathcal M}_{h,k,l}\bigl(\bb,\mb\bigr)$.
This number gives the right hand side of equation (\ref{cut_join_non}), which we have just argued also gives the left side of equation (\ref{cut_join_non}).

\end{proof}

Along the same line of arguments as the ones which proved Proposition \ref{prop2}, we obtain the proposition below.

\begin{proposition}\label{prop4}
The generating function $\widetilde{H}(x, y;\ttt,;\uuu)$ is uniquely determined by the differential equation
\begin{align}
{{\partial \widetilde{H}}\over{\partial y}} = (\widetilde{M}+\widetilde{S})\widetilde{H},
\nonumber
\end{align}
where $\widetilde{M}=M_0+xM_1^{\times}+x^2(M_2+M_2^{\times})$ and $\widetilde{S}=S+S^{\times}$. 
The generating function $\widetilde{Z}(x, y;\ttt,;\uuu)=\mathrm{exp}[\widetilde{H}]$ of the number of connected and disconnected partial chord diagrams filtered by the boundary length and point spectrum
satisfies 
\begin{align}
\frac{\partial \widetilde{Z}}{\partial y}= \widetilde{M}\widetilde{Z}.
\end{align}
As such they are uniquely determined by the initial conditions 
\begin{align}
\widetilde{H}(x,y=0;\ttt;\uuu)=\sum_{i\geq 1} t_{i}u_{(i)}, \quad \widetilde{Z}(x, y=0;\ttt,;\uuu)=\mathrm{e}^{\sum_{i\geq 1} t_{i}u_{(i)}}.
\nonumber
\end{align}
\end{proposition}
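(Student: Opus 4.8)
The plan is to follow the proof of Proposition \ref{prop2} essentially verbatim; the only genuinely new features are the extra twisted operators $M_1^{\times},M_2^{\times},S^{\times}$ and the power-of-$x$ bookkeeping now dictated by the cross-cap number $h$ rather than the genus.

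First I would establish the equivalence of the differential equation $\partial\widetilde H/\partial y=(\widetilde M+\widetilde S)\widetilde H$ with the cut-and-join recursion (\ref{cut_join_non}) of Proposition \ref{prop3}. To do this I extract the coefficient of $x^h y^{k-1}\ttt^\bb\uuu^\mb$ on both sides, keeping track of the $\tfrac{1}{b!}$ normalization in (\ref{gf_non}). On the left, $\partial/\partial y$ turns $y^k$ into $ky^{k-1}$ and produces $k\widetilde{\mathcal M}_{h,k,l}(\bb,\mb)$. On the right I match each operator to one group of the recursion. For a first-order operator such as $D_{s^{\times}_{I,J,\ell,m}(\db_K)}=u_a\,\partial/\partial u_{\db_K}$, where $a$ is the single reconnected boundary index, the derivative $\partial/\partial u_{\db_K}$ applied to $\uuu^{\mb+s^{\times}}$ brings down the factor $(m_{\db_K}+1)$ while multiplication by $u_a$ implements the index shift $\mb\mapsto\mb+s^{\times}_{I,J,\ell,m}(\db_K)$, reproducing (\ref{p1_non})–(\ref{p2_non}); the untwisted terms are handled exactly as in Proposition \ref{prop2}. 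The power of $x$ attached to each operator is forced by Euler's relation $2-h=b-k+n$: $M_0$ sits at $x^0$ (untwisted case one, $h$ unchanged), $M_1^{\times}$ at $x^1$ (twisted case one, $h\mapsto h-1$), and $M_2,M_2^{\times}$ at $x^2$ (case two, $h\mapsto h-2$). The quadratic operators $S,S^{\times}$ carry no extra power of $x$, since a disconnecting removal preserves the total cross-cap number $h=h_1+h_2$; the multiplicity $m^{(1)}_{\db_K}m^{(2)}_{\fb_L}$ and the backbone factor $b!/(b^{(1)}!\,b^{(2)}!)$ are produced by the product $D_{\db_K}(\widetilde H)\,D_{\fb_L}(\widetilde H)$ acting on the $\tfrac{1}{b!}$-normalized series, giving (\ref{p4_non}).

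Next, for the disconnected generating function $\widetilde Z=\exp[\widetilde H]$, I would use the identity
\[
D_{\db_K}(\widetilde H)\,D_{\fb_L}(\widetilde H)+D_{\db_K}D_{\fb_L}\widetilde H=\frac{1}{\widetilde Z}\,D_{\db_K}D_{\fb_L}\widetilde Z,
\]
which holds because $\widetilde Z=e^{\widetilde H}$, applied once with the prefactor $u_{c_{I,J,\ell,m}(\db_K,\fb_L)}$ and once with $u_{c^{\times}_{I,J,\ell,m}(\db_K,\fb_L)}$. The weighted second-order derivatives on the right are precisely those assembled in $M_2$ and $M_2^{\times}$, so the quadratic pieces are absorbed: $S$ combines with $M_2$ and $S^{\times}$ with $M_2^{\times}$. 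The first-order parts $M_0$ and $M_1^{\times}$ are derivations, hence $M_0\widetilde Z=(M_0\widetilde H)\widetilde Z$ and likewise for $M_1^{\times}$, so they pass through unchanged. Substituting into $\partial\widetilde Z/\partial y=(\partial\widetilde H/\partial y)\widetilde Z=\big[(\widetilde M+\widetilde S)\widetilde H\big]\widetilde Z$ then yields $\partial\widetilde Z/\partial y=\widetilde M\widetilde Z$.

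Finally, the initial condition is obtained exactly as in Proposition \ref{prop2}: a diagram with no chords is a disjoint union of single backbones, and a backbone carrying $i$ vertices contributes one boundary component with $i$ marked points, so $\widetilde H(x,0;\ttt;\uuu)=\sum_{i\ge1}t_iu_{(i)}$ and $\widetilde Z(x,0;\ttt;\uuu)=e^{\sum_{i\ge1}t_iu_{(i)}}$. Uniqueness is immediate since each equation is first order in $y$: the coefficient of $y^k$ is determined recursively by those of lower order, so the initial data fix $\widetilde H$ and $\widetilde Z$ completely. The main obstacle, and the only place genuinely new relative to the oriented case, is the twisted bookkeeping: verifying that the reversed-order index strings $s^{\times},q^{\times},c^{\times}$—which encode that non-oriented boundary tuples are read up to cyclic permutation \emph{and} reversal—make the derivative-with-shift mechanism reproduce (\ref{p1_non})–(\ref{p4_non}) with the correct multiplicities, and that the quadratic identity applies to $S^{\times}$ uniformly as it does to $S$.
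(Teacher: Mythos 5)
Your overall route is exactly the paper's: Proposition \ref{prop4} is proved there by repeating the proof of Proposition \ref{prop2}, i.e.\ coefficient extraction to identify the PDE for $\widetilde H$ with the cut-and-join recursion (\ref{cut_join_non}), the exponential identity to pass to $\widetilde Z=\exp[\widetilde H]$, and uniqueness from the equations being first order in $y$. Your treatment of the $\widetilde H$-equation and of the initial conditions is correct, and your $x$-bookkeeping (untwisted one-component removals at $x^0$, twisted ones at $x^1$, boundary-splitting removals at $x^2$, disconnecting removals at $x^0$ because $h_1+h_2=h$) is precisely what Theorem \ref{thm2} and Proposition \ref{prop3} dictate.

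However, that same bookkeeping makes your passage to $\widetilde Z$ break down, and this is a genuine gap --- one which is in fact present in the paper's own proof of Proposition \ref{prop2} and is silently inherited by Proposition \ref{prop4}. The exponential identity gives
\begin{align}
x^{2}\bigl(M_2+M_2^{\times}\bigr)\widetilde Z
=\Bigl[x^{2}\bigl(M_2+M_2^{\times}\bigr)\widetilde H
+x^{2}\bigl(S+S^{\times}\bigr)(\widetilde H)\Bigr]\widetilde Z ,
\nonumber
\end{align}
so that $\widetilde M\widetilde Z=\bigl[\widetilde M\widetilde H+x^{2}\,\widetilde S(\widetilde H)\bigr]\widetilde Z$, whereas the first claim of the proposition yields
\begin{align}
\frac{\partial\widetilde Z}{\partial y}
=\Bigl[\widetilde M\widetilde H+\widetilde S(\widetilde H)\Bigr]\widetilde Z .
\nonumber
\end{align}
The two differ by $(1-x^{2})\,\widetilde S(\widetilde H)\,\widetilde Z$, which is nonzero (already at order $y^{0}$ the derivatives $D_{\db_K}\widetilde H$ contain the terms $t_i$). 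So the step ``$S$ combines with $M_2$ and $S^{\times}$ with $M_2^{\times}$'' is false: by your own (correct) accounting, $\widetilde S$ enters at order $x^{0}$ while the operators that would absorb it enter at order $x^{2}$. The root cause is that the weight $x^{h}$ per connected component in (\ref{gf_non}) is not multiplicative in the relevant sense: adding a chord joining two \emph{distinct} components leaves $\sum_i h_i$ unchanged, while adding a chord joining a component to itself raises it by two, and $M_2+M_2^{\times}$ acting on $\widetilde Z$ cannot distinguish the two situations. A correct proof must repair this, e.g.\ by weighting each connected component by $x^{h-2}=x^{-\chi}$ (Euler characteristic is additive over disjoint unions); with that normalization the disconnecting term of (\ref{cut_join_non}) also enters the $\widetilde H$-equation with a factor $x^{2}$, the absorption becomes exact, and $\partial\widetilde Z/\partial y=\widetilde M\widetilde Z$ follows. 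With the normalization as literally written, the $\widetilde H$-equation and the $\widetilde Z$-equation of Proposition \ref{prop4} are mutually inconsistent, so no argument --- yours or the paper's --- can establish both simultaneously; your write-up, like the paper's, passes over exactly this point.
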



\end{document}